\DeclareMathOperator*{\minimize}{minimize}
\newcommand{\cc}{\textnormal{\textbf{CatEdgeClus}}}
\newcommand{\mwc}{\textnormal{\textbf{MultiwayCut}}}
\newcommand{\cut}{\textnormal{\textbf{cut}}}
\newcommand{\obj}{Categorical Edge Clustering}
\newcommand{\clustering}{Y}
\newcommand{\given}{\;\vert\;}
\newcommand{\xhdr}[1]{\vspace{0.5mm}\noindent{\textbf{#1.}}\hspace{0.5mm}}
\definecolor{mylinkcolor}{RGB}{0,0,140}
\crefname{figure}{Figure}{Figures}
\crefname{algocf}{Algorithm}{Algorithms}
\Crefname{algocf}{Algorithm}{Algorithms}
\title{Clustering in graphs and hypergraphs\\ with categorical edge labels}
\begin{abstract}
Modern graph or network datasets often contain rich structure that goes beyond
simple pairwise connections between nodes. This calls for complex
representations that can capture, for instance, edges of different types as well
as so-called ``higher-order interactions'' that involve more than two nodes at a
time. However, we have fewer rigorous methods that can provide insight from such
representations. Here, we develop a computational framework for the problem of
clustering hypergraphs with categorical edge labels --- or different interaction
types --- where clusters corresponds to groups of nodes that frequently
participate in the same type of interaction.

Our methodology is based on a combinatorial objective function that is related
to correlation clustering on graphs but enables the design of much more
efficient algorithms that also seamlessly generalize to hypergraphs. When there
are only two label types, our objective can be optimized in polynomial time,
using an algorithm based on minimum cuts. Minimizing our objective becomes
NP-hard with more than two label types, but we develop fast approximation
algorithms based on linear programming relaxations that have theoretical cluster
quality guarantees. We demonstrate the efficacy of our algorithms and the scope
of the model through problems in edge-label community detection, clustering with
temporal data, and exploratory data analysis.
\end{abstract}
\author{Ilya Amburg}
\affiliation{%
  \institution{Cornell University}
}
\email{ia244@cornell.edu}
\author{Nate Veldt}
\affiliation{%
  \institution{Cornell University}                                                                                                                                                           
}
\email{nveldt@cornell.edu}
\author{Austin R.~Benson}
\affiliation{%
  \institution{Cornell University}
}
\email{arb@cs.cornell.edu}
\begin{document}

\maketitle

\section{Introduction}
Representing data as a graph or network appears in numerous application domains,
including, for example, social network analysis, biological systems, the Web,
and any discipline that focuses on modeling interactions between
entities~\cite{easley2012networks,albert2002statistical,newman2003structure}.
The simple model of nodes and edges provides a powerful and flexible
abstraction, and over time, more expressive models have been developed to
incorporate richer structure in data.
In one direction, models now use more information about the nodes and
edges: multilayer networks capture nodes and edges of different
types~\cite{mucha2010community,kivela2014multilayer},
meta-paths formalize heterogeneous relational structure~\cite{sun2011pathsim,Dong2017metapath},
and graph convolutional networks use node features for prediction tasks~\cite{kipf2017semi}.
In another direction, \emph{group}, \emph{higher-order}, or \emph{multi-way}
interactions between several nodes --- as opposed to pairwise interactions ---
are paramount to the model.
In this space, interaction data is modeled with
hypergraphs~\cite{Zhang-2018-discussion,zhou2007learning,benson2019three},
tensors~\cite{acar2009link,papalexakis2014spotting,arrigo2019framework},
affiliation networks~\cite{lattanzi2009affiliation},
simplicial complexes~\cite{osting2017spectral,benson2018simplicial,salnikov2018simplicial,porter2019nonlinearity}, and
motif representations~\cite{benson2016higher,rossi2018higher}.
Designing methods that effectively analyze the richer structure encoded by these
expressive models is an ongoing challenge in graph mining and machine learning.

In this work, we focus on the fundamental problem of clustering,
where the general idea is to group nodes based on some similarity score.
While graph clustering methods have a long history~\cite{schaeffer2007graph,fortunato2010community,Leskovec2010,Moore-2017-community},
existing approaches for rich graph data do not
naturally handle networks with categorical edge labels.
In these settings, a categorical edge label encodes a type
of discrete similarity score --- two nodes connected by an edge
with category label $c$ are similar with respect to $c$.
This structure arises in a variety of settings:
brain regions are connected by different types of connectivity patterns~\cite{crossley2013cognitive};
edges in coauthorship networks are categorized by publication venues, and
copurchasing data can contain information about the type of shopping trip.
In the examples of coauthorship and copurchasing, the interactions are also higher-order --- publications
can involve multiple authors and purchases can be made up of several items.
Thus, we would like a scalable approach to clustering nodes using
a similarity score based on categorical edge labels that work
well for higher-order interactions.

Here, we solve this problem with a novel clustering framework for edge-labeled graphs.
Given a network with $k$ edge labels (categories or colors), we create $k$ clusters of nodes, 
each corresponding to one of the labels.
As an objective function for cluster quality, we seek to simultaneously minimize two quantities:
(i) the number of edges that cross cluster boundaries, and
(ii) the number of intra-cluster ``mistakes'', where an edge of one category is placed inside 
the cluster corresponding to another category.
This approach results in a clustering of nodes that respects 
both the coloring induced by the edge labels and the topology of the original network.
We develop this computational framework in a way that seamlessly generalizes to the case of hypergraphs
to model higher-order interactions, where hyperedges have categorical labels.

The style of our objective function is related to correlation clustering in signed networks~\cite{BansalBlumChawla2004},
as well as its generalization for discrete labels (colors), chromatic correlation clustering~\cite{Bonchi2012ccc,Bonchi2015ccc},
which are based on similar notions of mistake minimization.
However, a key difference is that our objective function does not penalize placing nodes not connected by an edge in the same cluster.
This modeling difference provides serious advantages in terms of tractability, scalability, and the ability to generalize to higher-order interactions.

We first study the case of edge-labeled (edge-colored) graphs with only two categories.
We develop an algorithm that optimizes our \obj{} objective function in polynomial
time by reducing the problem to a minimum $s$-$t$ graph cut problem on a related network.
We then generalize this construction to facilitate quickly finding the optimal solution exactly for hypergraphs.
This is remarkable on two fronts.
First, typical clustering objectives such as minimum bisection, ratio cut, normalized cut, and modularity are NP-hard to optimize
even in the case of two clusters~\cite{wagner1993between,brandes2007modularity}.
And in correlation clustering, having two edge types is also NP-hard~\cite{BansalBlumChawla2004}.
In contrast, our setup admits a simple algorithm based on minimum $s$-$t$ cuts.
Second, our approach seamlessly generalizes to hypergraphs.
Importantly, we do not approximate hyperedge cuts with weighted graph cuts,
which is a standard heuristic approach in hypergraph clustering~\cite{Agarwal2006,zhou2007learning,panli2017inhomogeneous}.
Instead, our objective exactly models the number of hyperedges that cross cluster boundaries and the number
of intra-cluster ``mistake'' hyperedges.

With more than two categories, we show that minimizing our objective is NP-hard, and
we proceed to construct several approximation algorithms.
The first set of algorithms are based on practical linear programming relaxations,
achieving an approximation ratio of $\min \left\{2 - \frac{1}{k}, 2 - \frac{1}{r+1}  \right\}$,
where $k$ is the number of categories and $r$ is the maximum hyperedge size ($r = 2$ for the graph case).
The second approach uses a reduction to multiway cut, where practical algorithms have a
$\frac{r + 1}{2}$ approximation ratio and algorithms of theoretical interest have a $2(1 - \frac{1}{k})$ approximation ratio.

We test our methods on synthetic benchmarks as well as a variety of real-world
datasets coming from neuroscience, biomedicine, and social and information
networks; our methods work far better than baseline approaches at minimizing our
objective function. Surprisingly, our linear programming relaxation often
produces a rounded solution that matches the lower bound, i.e., it exactly
minimizes our objective function. Furthermore, our algorithms are also fast in
practice, often taking under 30 seconds on large hypergraphs.

We examine an application to a variant of the community detection problem
where edge labels indicate that two nodes are in the same cluster and find that
our approach accurately recovers ground truth clusters.  We also show how
our formulation can be used for temporal community detection, in which
one clusters the graph based on topology and temporal consistency.
In this case, we treat binned edge timestamps as categories, and
our approach finds good clusters in terms of topological
metrics \emph{and} temporal aggregation metrics. Finally, we provide a case
study in exploratory data analysis with our methods using cooking data, where a
recipe's ingredients form a hyperedge and its edge label the cuisine type.

\section{Preliminaries and related work}
Let $G = (V,E,C, \ell)$ be an edge-labeled (hyper)graph, where
$V$ is a set of nodes, 
$E$ is a set of (hyper)edges,
$C$ is a set of categories (or colors), and
$\ell\colon E \rightarrow C$ is a function which labels every edge with a category.
Often, we just use $C = \{1,2,\ldots,k\}$, and we can
think of $\ell$ as a coloring of the edges.
We use ``category'', ``color'', and ``label'' interchangeably, as these terms
appear in different types of literature (e.g., ``color'' is common for discrete labeling in graph
theory and combinatorics).
We use $k = \lvert C \rvert$ to denote the number of categories,
$E_c \subseteq E$ for the set of edges having label $c$, and
$r$ for the maximum hyperedge size (i.e., \emph{order}), where the size
of a hyperedge is the number of nodes it contains (in the case of graphs, $r=2$).

\subsection{Categorical edge clustering objective}
Given $G$, we consider the task of assigning a category (color) to each node in such a way that nodes in 
category $c$ tend to participate in edges with label $c$; in this setup, we partition the nodes into $k$ clusters 
with one category per cluster.
We encode the objective function as minimizing the number of ``mistakes'' in a clustering, where a mistake is
an edge that either (i) contains nodes assigned to different clusters or (ii) is placed in a cluster corresponding
to a category which is not the same as its label.
In other words, the objective is to minimize the number of edges that are not completely contained in the cluster corresponding to the edge's label.

Let $\clustering$ be a categorical clustering, or equivalently, a coloring of the nodes, where
$\clustering[i]$ denotes the color of node $i$. 
Let $m_{\clustering}\colon E \rightarrow \{0,1\}$ be the \emph{category-mistake} function, defined for an edge $e \in E$ by
\begin{equation}
m_{\clustering} (e) = \begin{cases} 1 & \text{ if $\clustering[i] \neq \ell(e) $ for any node $i \in e$,} \\
0 & \text{ otherwise.}
\end{cases}
\end{equation}
Then, the \emph{Categorical Edge Label Clustering} objective score for the clustering $\clustering$ is
simply the number of mistakes:
\begin{equation} \label{eq:chromec}
\textstyle \cc(\clustering) = \sum_{e \in E} m_{\clustering}(e).
\end{equation}
This form applies equally to hypergraphs; a mistake is a hyperedge with
a node placed in a category different from the edge's label.

Our objective can easily be modified for weighted (hyper)graphs.
If a hyperedge $e$ has weight $w_e$, then the category mistake function simply becomes $m_{\clustering}(e)=w_e$ if $\clustering[i]\neq\ell(e)$ for any node $i$ in $e$ and is $0$ otherwise.
Our results easily generalize to this setting, but we present results in the unweighted case for ease of notation.

\subsection{Relation to Correlation Clustering}
Our objective function is related to chromatic correlation clustering~\cite{Bonchi2015ccc}, in which one clusters an edge-colored
graph into any number of clusters, and a penalty is incurred for any one of three types of \emph{mistakes}:
(i) an edge of color $c$ is placed in a cluster of a different color; 
(ii) an edge of any color has nodes of two different colors; or 
(iii) a pair of nodes \emph{not} connected by an edge is placed inside a cluster.
This objective is a strict generalization of the classical correlation clustering objective~\cite{BansalBlumChawla2004}.

Our \obj{} objective is similar, except we remove the penalty for placing non-adjacent nodes in the same cluster (mistakes of type (iii)). 
The chromatic correlation clustering objective treats the absence of an edge between nodes $i$ and $j$ as a strong indication that these nodes should not share the same label.
We instead interpret a non-edge simply as missing information: the absence of an edge may be an indication that $i$ and $j$ do not belong together, but it may also be the case that they have a relationship that simply has not been measured.
This is a natural assumption with large, sparse real-world graphs, where we rarely have information on all pairs of entities. 
Another key difference between chromatic correlation clustering and our objective is that in the former,
one may form several clusters for the same color.
For our objective, merging two separate clusters for the same color can only improve the objective.

Our formulation also leads to several differences in computational tractability.
Chromatic correlation clustering is NP-hard in general, and there are several approximation algorithms~\cite{Bonchi2012ccc,Bonchi2015ccc,Anava:2015:ITP:2736277.2741629}.
The tightest of these is a $4$-approximation, though the algorithm is mostly of theoretical interest, as it involves solving an incredibly large linear program.
Moreover, the higher-order generalization of simple correlation clustering (without colors) to hypergraphs is more complicated to solve and approximate than standard correlation clustering~\cite{gleich2018ccgen,fukunga2018highcc,Li2017motifcc,li2018motifCC}.
We will show that our \obj{} objective can be solved in polynomial time for graphs and hypergraphs with two categories.
The problem becomes NP-hard for more than two categories, but we are able to obtain practical 2-approximation algorithms for both graphs and hypergraphs.
Our approaches are based on linear programming relaxations that are small enough to be solved quickly in practice.

\subsection{Additional related work}
There are several methods for clustering general data points that
have categorical features~\cite{ganti1999cactus,gibson2000clustering,boriah2008similarity},
but these methods are not designed for clustering graph data.
There are also methods for clustering in graphs with attributes~\cite{Xu-2012-model,Akoglu-2012-pics,Zhou-2009-clustering,Bothorel-2015-attributed};
these focus on vertex features and do not connect categorical features to cluster indicators.
Finally, there are several clustering approaches for multilayer networks
modeling edge types~\cite{mucha2010community,DeDomenico-2015-reducibility,Lancichinetti-2012-consensus},
but the edge types are not meant to be indicative of a cluster type.

\section{The case of two categories}\label{sec:twocats}
In this section we design algorithms to solve the \obj{}
problem when there are only two categories. In this case, both the graph and hypergraph
problem can be reduced to a minimum $s$-$t$ cut problem, which can be efficiently solved.

\subsection{An algorithm for graphs}\label{sec:graph2}
To solve the two-category problem on graphs, we first convert it to an instance of a weighted
minimum $s$-$t$ cut problem on a graph with no edge labels. 
Recall that $E_c$ is the set of edges with category label $c$.
Given the edge-labeled graph $G = (V, E, C, \ell)$, we construct a new graph $G' = (V', E')$ as follows:
\begin{itemize}
	\item Introduce a terminal node $v_c$ for each of the two labels $c \in L$, 
	so that $V' = V\cup V_t$ where $V_t = \{v_c \given c \in L\}$.
	\item For each label $c$ and each $(i,j) \in E_c$, introduce edges $(i,j)$, $(v_c, i)$ and $(v_c,j)$, 
	all of which have weight $\frac{1}{2}$.
\end{itemize}
Since there are only two categories $c_1$ and $c_2$, let $s = v_{c_1}$ be treated as a source node and 
$t = v_{c_2}$ be treated as a sink node. The minimum $s$-$t$ cut problem in $G'$ is defined by
\begin{equation}
\label{minstcut}
\minimize_{S \subseteq V} \,\, \cut(S\cup{s}),
\end{equation}
where $\cut(T)$ is the weight of edges crossing from nodes in
$T \subset V'$ to its complement set $\bar{T} = V' \backslash T$. 
This classical problem that can be efficiently solved in polynomial time,
and we have an equivalence with the original two-category edge clustering objective.
\begin{proposition}
For any $S \subseteq V$, the value of $\cut(S\cup s)$ in $G'$ is equal to the value of $\cc(\{S, \bar{S} \})$, 
where $S$ and $\bar{S}$ are the clusters for categories $c_1$ and $c_2$.
\end{proposition}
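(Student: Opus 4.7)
The plan is to establish the identity edge by edge: I will show that the three weight-$\tfrac{1}{2}$ edges added in $G'$ for each labeled edge $(i,j) \in E_c$ contribute to $\cut(S \cup \{s\})$ exactly $m_Y(i,j)$, i.e., $1$ when $(i,j)$ is a mistake under the clustering $\{S, \bar S\}$ and $0$ otherwise. Summing this local identity over all $e \in E$ then yields $\cut(S \cup \{s\}) = \cc(\{S, \bar S\})$.

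Setting up the computation, I would fix $S \subseteq V$ and note that the source side of the cut is $S \cup \{s\}$ and the sink side is $\bar S \cup \{t\}$, so $s = v_{c_1}$ is grouped with the $c_1$-cluster $S$ and $t = v_{c_2}$ with the $c_2$-cluster $\bar S$. For an edge $(i,j) \in E_{c_1}$, the associated gadget in $G'$ consists of $(i,j)$, $(s,i)$, and $(s,j)$, each of weight $\tfrac{1}{2}$. I would enumerate the four placements of $\{i,j\}$ relative to $S$: if both $i,j \in S$ (no mistake, since the edge's label matches the cluster containing it), none of the three gadget edges is cut; in each of the other three configurations, exactly two of the three gadget edges are cut, for a total contribution of $2 \cdot \tfrac{1}{2} = 1$, matching $m_Y(i,j) = 1$. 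The case $(i,j) \in E_{c_2}$ is symmetric: the gadget is $(i,j)$, $(t,i)$, $(t,j)$, the no-mistake configuration is $i,j \in \bar S$, and each of the remaining three configurations again cuts exactly two gadget edges.

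I do not anticipate a real obstacle here; the substance of the proposition lies in the design of the gadget, and the proof itself reduces to a short case analysis. The one point worth being careful about is confirming that the ``two-out-of-three edges cut'' pattern genuinely occurs in every nontrivial placement of $\{i,j\}$, since this is precisely what makes $2 \cdot \tfrac{1}{2} = 1$ land on the nose. This also clarifies why the construction uses a triangle of weight-$\tfrac{1}{2}$ edges instead of, for example, a direct weight-$1$ edge between $i$ and $j$ plus weight-$1$ terminal edges, which would miscount the mistake in at least one of the configurations.
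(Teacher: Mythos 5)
Your proof is correct and follows essentially the same route as the paper's: a per-edge case analysis of the three weight-$\tfrac{1}{2}$ gadget edges, showing each mistake contributes exactly two cut edges (total weight $1$) and each non-mistake contributes zero, then summing. The paper's version is terser (it only enumerates the mistake configurations), but the argument is the same.
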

\begin{proof}
Let edge $e = (i,j)$ be a ``mistake'' in the clustering ($m_{\clustering}(e) = 1$) and without loss of generality have color $c_1$.
If $i$ and $j$ are assigned to $c_2$, then the half-weight edges
$(i,v_{c_1})$ and $(j,v_{c_1})$ are cut.
Otherwise, exactly one of $i$ and $j$ is assigned to $c_2$.
Without loss of generality, let it be $i$.
Then $(i, v_{c_1})$ and $(i,j)$ are cut.
\end{proof}
Thus, a minimizer for the $s$-$t$ cut in $G'$ directly gives us a minimizer for
our \obj{} objective.
We next provide a similar reduction for
the case of hypergraphs.

\subsection{An algorithm for hypergraphs}
\begin{figure}[t]
    \input{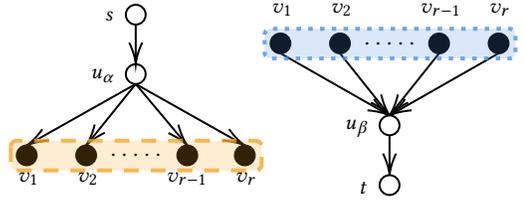}
    \caption{Subgraphs used for the $s$-$t$ cut reduction of two-color \obj{} in hypergraphs.
    Here, $\alpha$ and $\beta$ are hyperedges in the original hypergraph 
    with colors $c_1$ (orange, left) and $c_2$ (blue, right).}
    \label{fig:subgraphs}
\end{figure}
We now develop a method to exactly solve our objective in the two-color case with arbitrary order-$r$ hypergraphs,
and we again proceed by reducing to an $s$-$t$ cut problem.
Our approach is to construct a subgraph for every hyperedge and
paste these subgraphs together to create a new graph $G' = (V',E')$,
where minimum $s$-$t$ cuts produce partitions that minimize the \obj{} objective.
A similar construction has been used for a $\mathcal{P}^{r}$ Potts model in computer vision~\cite{Kohli2007},
and our reduction is the first direct application of this approach to network analysis.

We start by adding terminal nodes $s = v_{c_1}$ and $t = v_{c_2}$ (corresponding to categories $c_1$ and $c_2$)
as well as all nodes in $V$ to $V'$.
For each hyperedge $e = (v_1, \ldots, v_r)$ of $G$, we add a node $u_e$ to $V'$
and add the following \emph{directed} edges to $E'$ (see also \Cref{fig:subgraphs}):
\begin{itemize}
    \item If $e$ has label $c_1$, add $(s, u_e)$, $(u_e,v_1), \ldots, (u_e,v_r)$ to $E'$.
    \item If $e$ has label $c_2$, add $(u_e,t)$, $(v_1,u_e), \ldots, (v_r,u_e)$ to $E'$.
\end{itemize}
Again, the minimum $s$-$t$ cut on $G'$ produces a partition that also minimizes the categorical edge 
clustering objective, as shown below.
\begin{theorem}
Let $\mathcal{S^*}$ be the solution to the minimum cut problem.
Then the label assignment $\clustering$ defined by $\clustering[i] = c_1$ if $i \in S^*$
and $\clustering[i] = c_2$ if $i \in \bar{S}^*$ minimizes the \obj{} objective.
\end{theorem}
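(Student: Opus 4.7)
I would mirror the argument used for the graph case in the preceding proposition, but track the contribution of every hyperedge as a function of where the auxiliary node $u_e$ is placed. Fix any $s$-$t$ cut $(S,\bar S)$ in $G'$ with $s\in S$, $t\in\bar S$, and interpret the restriction of the cut to the original vertices $V$ as a labeling $Y$: node $i\in V$ gets label $c_1$ if $i\in S$ and $c_2$ if $i\in\bar S$. The goal is to show that, up to the (independent) choice of the sides of the $u_e$, the cost of any such $s$-$t$ cut equals $\cc(Y)$ on the part coming from $e$, and hence that minimizing the cut is the same as minimizing $\cc$.

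\textbf{Per-hyperedge accounting.} Consider a hyperedge $e = (v_1,\dots,v_r)$ of color $c_1$; the $c_2$ case is symmetric with $s$ and $t$ swapped. The only edges of $G'$ incident to $e$'s gadget are the directed edges $(s,u_e)$ and $(u_e,v_i)$ for $i=1,\dots,r$. Since these edges are directed, $(u_e,v_i)$ contributes to the cut only when $u_e\in S$ and $v_i\in\bar S$, and $(s,u_e)$ contributes only when $u_e\in\bar S$. I would split into two cases on $Y$:
\begin{itemize}
\item If every $v_i\in S$ (no mistake for $e$), then placing $u_e\in S$ contributes $0$, while $u_e\in\bar S$ contributes $1$ from $(s,u_e)$. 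The minimum contribution is $0 = m_Y(e)$.
\item If some $v_i\in\bar S$ (mistake for $e$), then $u_e\in S$ contributes the positive number of cut edges $(u_e,v_i)$ with $v_i\in\bar S$, while $u_e\in\bar S$ contributes exactly $1$ from $(s,u_e)$. The minimum contribution is $1 = m_Y(e)$.
\end{itemize}
In either case, the optimal placement of $u_e$ makes the $e$-contribution equal to $m_Y(e)$.

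\textbf{Putting it together.} Because the $u_e$'s are independent across hyperedges, the minimum $s$-$t$ cut value among all extensions of a fixed labeling $Y$ of $V$ equals $\sum_{e\in E} m_Y(e) = \cc(Y)$. Conversely, any $s$-$t$ cut in $G'$ has value at least $\sum_e m_Y(e)$ for the induced $Y$, since each hyperedge contributes at least $m_Y(e)$ by the case analysis above. Therefore
\[
\min_{S\subseteq V'} \cut(S\cup\{s\}) \;=\; \min_{Y\colon V\to\{c_1,c_2\}} \cc(Y),
\]
and the labeling $Y$ read off from a minimum $s$-$t$ cut $\mathcal{S}^*$ as in the theorem statement attains this common value, proving that $Y$ minimizes $\cc$.

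\textbf{Main obstacle.} The only subtle point is that the edges in each gadget are directed, so we must be careful that a mistaken hyperedge cannot be ``paid for'' by cutting one of the $(u_e,v_i)$ edges in the wrong direction; the case split above is what rules this out and what justifies restricting attention to extensions where $u_e$ is placed optimally. Once this is in hand, the rest is additivity over hyperedges.
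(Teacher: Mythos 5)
Your proof is correct and follows essentially the same route as the paper: a per-gadget case analysis of the directed cut contribution of each hyperedge's auxiliary node $u_e$, followed by additivity. The only difference is presentational --- you make explicit the two-sided argument (every labeling extends to a cut of value $\cc(Y)$, and every cut costs at least $\cc(Y)$ of its induced labeling), whereas the paper argues directly about the optimal placement of $u_e$ in a minimum cut; your version is slightly more complete but not a different approach.
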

\begin{proof}
Consider a hyperedge $e = (v_1, \ldots, v_r)$ with label $c_2$. 
We show that $m_{\clustering}(e)$ precisely corresponds to an $s$-$t$ cut on the subgraph of $G'$
induced by $e$ (\cref{fig:subgraphs}, right).
If $\clustering[v_1] = \ldots = \clustering[v_r] = c_2$, then $v_1, \ldots, v_r \in \bar{S}^*$
and the cost of the minimum $s$-$t$-cut is 0 (via placing $s$ by itself).
Now suppose at least one of $\clustering[v_1], \ldots, \clustering[v_r]$ equals $c_1$. 
Without loss of generality, say that $\clustering[v_1]=c_1$, so $v_1 \in S^*$.
If $u_e \in S^*$, we cut $(u_e,t)$ and none of the edges $(v_i, u_e)$ contribute to the cut.
If $u_e \in \bar{S}^*$, we cut $(v_1,u_e)$; and it cannot be the
case that $(v_i, u_e)$ is cut for $i \neq 1$ (otherwise, we could have reduced the cost of the minimum cut
by placing $u_e \in S^*$).

To summarize, if edge $e$ with label $c_2$ induces a mistake in the clustering,
then the cut contribution is $1$; otherwise, it is $0$.
A symmetric argument holds if $e$ has label $c_1$, using the graph in \cref{fig:subgraphs} (left).
By additivity, minimizing the $s$-$t$ cut in $G'$ minimizes the number of mistakes in the \obj{} objective.
\end{proof}
This procedure also works for the special case of graphs.
However, $G'$ has more nodes and directed edges in the more general reduction,
which can increase running time in practice.

\xhdr{Computational considerations}
Both algorithms solve a single minimum cut
problem on a graph with $O(T)$ vertices and $O(T)$ edges, where
$T = \sum_{e \in E} \lvert e \rvert$ is the sum of hyperedge degrees (this is
bounded above by $r\lvert E \rvert$, where $r$ is the order of the hypergraph).
In theory, this can be solved in $O(T^2)$ time in the worst case~\cite{orlin2013max}.
However, practical performance is often much different than this worst-case running time.
That being said, we do find the maximum flow formulations to often be slower than the
linear programming relaxations we develop in \cref{sec:morecats}.
We emphasize that being able to solve the \obj{} objective in polynomial time for two colors is itself interesting,
and that the algorithms we use for experiments in \cref{sec:experiments} are able to scale to large hypergraphs.

\xhdr{Considerations for unlabeled edges}
Our formulation assumed that all of the (hyper)edges carry a unique label.
However, in some datasets, there may be edges with no label or both labels.
In these cases, the edge's existence still signals that its constituent nodes
should be colored the same --- just not with a particular color.
A natural augmentation to our objective is then to penalize
this edge only when it is not entirely contained in \emph{some} cluster.
Our reductions above handle this case by simply connecting the corresponding
nodes in $V'$ to both terminals instead of just one.

\section{More than two categories}\label{sec:morecats}
We now move to the general formulation of \obj{}
when there can be more than two categories or labels.
We first show that optimizing the objective in this setting is NP-hard.
After, we develop approximation algorithms based on linear programming relaxations and
multiway cut problems with theoretical guarantees on solution quality.
Many of these algorithms are practical, and we use them in numerical experiments in \cref{sec:experiments}.

\subsection{NP-hardness of \obj{}}
We now prove that the \obj{} objective is NP-hard for the case of three categories. 
Our proof follows the structure of the NP-hardness reduction for 3-terminal multiway cut~\cite{Dahlhaus94thecomplexity},
and the reduction is from the NP-hard maximum cut (maxcut) problem.
Written as a decision problem, this problem seeks to answer if there exists
a partition of the nodes of a graph into two sets such that the number of edges cut
by the partition is at least $K$.

\begin{figure}[t]
	\centering
	\scalebox{1.5}{\tikzset{every picture/.style={line width=0.75pt}} 

\begin{tikzpicture}[x=0.75pt,y=0.75pt,yscale=-1,xscale=1]

\draw [line width=1.25]   (125,225) -- (145,205) ;

\draw  [color={rgb, 255:red, 0; green, 0; blue, 0 }  ,draw opacity=1 ][fill={rgb, 255:red, 255; green, 255; blue, 255 }  ,fill opacity=1 ] (120,225) .. controls (120,222.24) and (122.24,220) .. (125,220) .. controls (127.76,220) and (130,222.24) .. (130,225) .. controls (130,227.75) and (127.76,229.99) .. (125,229.99) .. controls (122.24,229.99) and (120,227.75) .. (120,225) -- cycle ;
\draw  [color={rgb, 255:red, 0; green, 0; blue, 0 }  ,draw opacity=1 ][fill={rgb, 255:red, 255; green, 255; blue, 255 }  ,fill opacity=1 ] (140,205) .. controls (140,202.24) and (142.24,200) .. (145,200) .. controls (147.76,200) and (150,202.24) .. (150,205) .. controls (150,207.75) and (147.76,209.99) .. (145,209.99) .. controls (142.24,209.99) and (140,207.75) .. (140,205) -- cycle ;
\draw [color={rgb, 255:red, 74; green, 144; blue, 226 }  ,draw opacity=1, line width=1.25]   (195,240) -- (195,229.99) ;

\draw [color={rgb, 255:red, 126; green, 211; blue, 33 }  ,draw opacity=1, line width=1.25]   (235,220) -- (235,209.99) ;

\draw [color={rgb, 255:red, 126; green, 211; blue, 33 }  ,draw opacity=1, line width=1.25]   (200,245) -- (210,245) ;

\draw [color={rgb, 255:red, 208; green, 2; blue, 27 }  ,draw opacity=1, line width=1.25]   (230,225) -- (200,225) ;

\draw [color={rgb, 255:red, 208; green, 2; blue, 27 }  ,draw opacity=1, line width=1.25]   (215,240) -- (215,209.99) ;

\draw [color={rgb, 255:red, 74; green, 144; blue, 226 }  ,draw opacity=1, line width=1.25]   (230,205) -- (220,205) ;

\draw    [line width=1.25] (156,225) -- (179,225) ;
\draw [shift={(181,225)}, rotate = 180] [fill={rgb, 255:red, 0; green, 0; blue, 0 }  ] [draw opacity=0] (10.72,-5.15) -- (0,0) -- (10.72,5.15) -- (7.12,0) -- cycle    ;

\draw  [color={rgb, 255:red, 0; green, 0; blue, 0 }  ,draw opacity=1 ][fill={rgb, 255:red, 255; green, 255; blue, 255 }  ,fill opacity=1 ] (210,205) .. controls (210,202.24) and (212.24,200) .. (215,200) .. controls (217.76,200) and (220,202.24) .. (220,205) .. controls (220,207.75) and (217.76,209.99) .. (215,209.99) .. controls (212.24,209.99) and (210,207.75) .. (210,205) -- cycle ;
\draw  [color={rgb, 255:red, 0; green, 0; blue, 0 }  ,draw opacity=1 ][fill={rgb, 255:red, 255; green, 255; blue, 255 }  ,fill opacity=1 ] (190,225) .. controls (190,222.24) and (192.24,220) .. (195,220) .. controls (197.76,220) and (200,222.24) .. (200,225) .. controls (200,227.75) and (197.76,229.99) .. (195,229.99) .. controls (192.24,229.99) and (190,227.75) .. (190,225) -- cycle ;
\draw  [color={rgb, 255:red, 0; green, 0; blue, 0 }  ,draw opacity=1 ][fill={rgb, 255:red, 255; green, 255; blue, 255 }  ,fill opacity=1 ] (210,245) .. controls (210,242.24) and (212.24,240) .. (215,240) .. controls (217.76,240) and (220,242.24) .. (220,245) .. controls (220,247.75) and (217.76,249.99) .. (215,249.99) .. controls (212.24,249.99) and (210,247.75) .. (210,245) -- cycle ;
\draw  [color={rgb, 255:red, 0; green, 0; blue, 0 }  ,draw opacity=1 ][fill={rgb, 255:red, 255; green, 255; blue, 255 }  ,fill opacity=1 ] (230,225) .. controls (230,222.24) and (232.24,220) .. (235,220) .. controls (237.76,220) and (240,222.24) .. (240,225) .. controls (240,227.75) and (237.76,229.99) .. (235,229.99) .. controls (232.24,229.99) and (230,227.75) .. (230,225) -- cycle ;
\draw  [color={rgb, 255:red, 0; green, 0; blue, 0 }  ,draw opacity=1 ][fill={rgb, 255:red, 255; green, 255; blue, 255 }  ,fill opacity=1 ] (230,205) .. controls (230,202.24) and (232.24,200) .. (235,200) .. controls (237.76,200) and (240,202.24) .. (240,205) .. controls (240,207.75) and (237.76,209.99) .. (235,209.99) .. controls (232.24,209.99) and (230,207.75) .. (230,205) -- cycle ;
\draw  [color={rgb, 255:red, 0; green, 0; blue, 0 }  ,draw opacity=1 ][fill={rgb, 255:red, 255; green, 255; blue, 255 }  ,fill opacity=1 ] (190,245) .. controls (190,242.24) and (192.24,240) .. (195,240) .. controls (197.76,240) and (200,242.24) .. (200,245) .. controls (200,247.75) and (197.76,249.99) .. (195,249.99) .. controls (192.24,249.99) and (190,247.75) .. (190,245) -- cycle ;

\draw (127,255.83) node [scale=0.7]  {$\mathrm{Edge}$};
\draw (125,225) node [scale=0.7]  {$u$};
\draw (145,205) node [scale=0.7]  {$v$};
\draw (216,255.83) node [scale=0.7]  {$3\mathrm{-color\ gadget}$};
\draw (215,205) node [scale=0.7]  {$v$};
\draw (195,225) node [scale=0.7]  {$u$};

\end{tikzpicture}}
	\caption{Gadget used for reducing maxcut to 3-color \obj{}.
          Each gadget has new auxiliary nodes, but $u$ and $v$ may be a part of many 3-color gadgets.} 
	\label{fig:3colorgadget}
\end{figure}
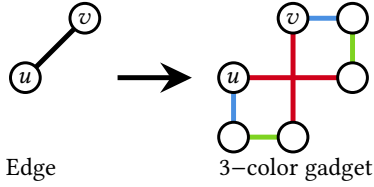

Consider an unweighted instance of maxcut on a graph $G = (V,E)$. 
To convert this into an instance of 3-color \obj{}, we replace each edge 
$(u,v) \in E$ with the 3-color gadget in \cref{fig:3colorgadget}.
We will use the following lemma in our reduction.
\begin{lemma} \label{lem:3color}
	In any node coloring of the 3-color gadget (\cref{fig:3colorgadget}), the minimum number
	of edges whose color does not match both of its nodes (i.e., number of mistakes in categorical edge clustering) is three.
	This only occurs when one of $\{u,v\}$ is red and the other is blue.
\end{lemma}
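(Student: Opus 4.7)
The plan is to notice that the gadget is in fact a single $6$-cycle whose edges carry a forced periodic coloring, and then to bound the mistake count using the matching number of $C_6$. Writing the four auxiliary vertices as $a,b,c,d$, the six edges of \cref{fig:3colorgadget} form the cycle $u \to a \to b \to v \to d \to c \to u$ whose edges, taken in order around the cycle, have colors blue, green, red, blue, green, red. In particular any two consecutive edges of the cycle have distinct colors, and $u,v$ are antipodal.

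\textbf{Lower bound.}
By definition, an edge $e$ of color $X$ is not a mistake under a node coloring $Y$ if and only if both endpoints of $e$ are assigned color $X$. If two edges of the cycle share a vertex, they have distinct colors, so the shared vertex would have to carry two different colors for both edges to be mistake-free -- impossible. Hence the set of mistake-free edges is a matching in $C_6$. Since the matching number of $C_6$ is $3$, at most three edges can be mistake-free, so any node coloring incurs at least $6-3=3$ mistakes.

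\textbf{Equality case and main obstacle.}
To attain exactly three mistakes, the mistake-free edges must form a perfect matching of $C_6$, and there are only two such matchings: $M_1=\{ua, bv, dc\}$ and $M_2=\{ab, vd, cu\}$. In each matching the color of an edge forces both of its endpoints to take that same color, and I would simply read off the consequences for $u$ and $v$. Under $M_1$, the blue edge $ua$ and the red edge $bv$ force $u$ blue and $v$ red; under $M_2$, the red edge $cu$ and the blue edge $vd$ force $u$ red and $v$ blue. Either way $\{u,v\}=\{\text{red},\text{blue}\}$, completing the proof. The only nontrivial step is the structural observation that the gadget is this $6$-cycle with a periodic color pattern; once it is made, the rest is the elementary matching argument on $C_6$, and no case analysis over the $3^6$ possible colorings is required.
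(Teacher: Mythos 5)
Your proof is correct, and it takes a genuinely different route from the paper's. The paper argues by exhibiting the two optimal colorings (pairing every gadget node with its horizontal or its vertical neighbor in \cref{fig:3colorgadget}, each costing $3$), noting that forcing $u$ and $v$ into the same cluster costs $4$, and then asserting that ``all other color assignments yield a penalty of 4 or more'' without a detailed check. You instead observe that the gadget is a $6$-cycle in which any two consecutive edges carry distinct colors, so the set of mistake-free edges must be a matching; the matching number of $C_6$ then gives the lower bound of $3$ immediately, and the equality case is pinned down because $C_6$ has exactly two perfect matchings --- which are precisely the paper's ``horizontal'' and ``vertical'' pairings, and both force $\{u,v\}=\{\text{red},\text{blue}\}$. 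Your argument buys a fully rigorous proof of the ``only occurs when'' direction (the part the paper leaves to inspection) with no enumeration over the $3^6$ colorings, at the cost of the one structural observation about the cycle and its periodic edge-coloring; the paper's version is shorter and has the minor advantage of explicitly naming the two optimal colorings in the form (horizontal/vertical neighbors) that the subsequent NP-hardness reduction reuses. Both establish achievability of $3$: yours via the colorings induced by the two perfect matchings, the paper's by direct construction.
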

\begin{proof}
	If $v$ is blue and $u$ is red, then we can achieve the minimum three mistakes 
	by clustering each node in the gadget with its horizontal neighbor in \cref{fig:3colorgadget}
        or alternatively by placing each node with its vertical neighbor.
	If $u$ and $v$ are constrained to be in the same cluster, then the optimal solution is to place all nodes in the gadget together, which makes 4 mistakes.
        It is not hard to check that all other color assignments yield a penalty of 4 or more.
\end{proof}

Now let $G'$ be the instance of 3-color \obj{} obtained by replacing each edge $(u,v) \in E$ 
with a 3-color gadget.
\begin{theorem}
	There exists a partition of the nodes in $G$ into two sets with $K$ or more cut edges 
	if and only if there is a 3-coloring of the nodes in $G'$ that makes $4 \lvert E \rvert - K$ or fewer mistakes.
\end{theorem}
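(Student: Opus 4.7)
The plan is to apply Lemma~\ref{lem:3color} independently to each gadget, taking advantage of the fact that the auxiliary nodes within different gadgets are disjoint and only the original endpoints $u,v$ are shared across gadgets. Define $f(c_u,c_v)$ to be the minimum number of mistakes a single gadget can incur when its endpoints are colored $c_u$ and $c_v$. Lemma~\ref{lem:3color} gives $f(\text{red},\text{blue})=f(\text{blue},\text{red})=3$ and $f(c_u,c_v)\ge 4$ for every other pair; moreover, its proof shows $f(c,c)=4$ whenever $c\in\{\text{red},\text{blue}\}$, witnessed by the all-$c$ coloring of the gadget. Because the auxiliary nodes of different gadgets do not interact, the total mistake count of any 3-coloring $\clustering$ of $G'$ is at least $\sum_{(u,v)\in E} f(\clustering[u],\clustering[v])$, with equality achievable by setting each gadget's auxiliaries optimally given $\clustering[u],\clustering[v]$.

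For the forward direction, given a 2-partition of $V$ cutting $K'\ge K$ edges, color one side red and the other blue and extend each gadget to its optimal auxiliary configuration. Cut edges then contribute exactly $3$ mistakes and uncut edges contribute exactly $4$, for a total of $3K'+4(|E|-K')=4|E|-K'\le 4|E|-K$, as desired.

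For the converse, suppose $\clustering$ is a 3-coloring of $G'$ with at most $4|E|-K$ mistakes, and let $k_1$ be the number of edges $(u,v)\in E$ with $\{\clustering[u],\clustering[v]\}=\{\text{red},\text{blue}\}$. The lower bound above yields $4|E|-K\ge 3k_1+4(|E|-k_1)=4|E|-k_1$, hence $k_1\ge K$. Partition $V$ as $A=\{v\in V:\clustering[v]=\text{red}\}$ versus $B=V\setminus A$; every red-blue edge of $G$ has one endpoint on each side and is cut, so this 2-partition produces at least $k_1\ge K$ cut edges. The only delicate point is that some original nodes may be colored green in $\clustering$; the "red vs.\ non-red" partition absorbs such nodes harmlessly into $B$ without disturbing any of the $k_1$ red-blue cut edges, so no extra case analysis is required.

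The main step to nail down is the per-gadget decoupling encoded by $f$: once one is satisfied that the auxiliaries of different gadgets can be optimized independently (which follows from their being distinct nodes touching no other gadgets), the rest of the argument is bookkeeping with the bound $f\ge 3$ on "good" edges and $f\ge 4$ on the others.
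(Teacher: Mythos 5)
Your proof is correct and follows essentially the same route as the paper's: both directions rest on applying \cref{lem:3color} gadget-by-gadget together with additivity over the disjoint auxiliary nodes, and your counting of red--blue edges via the function $f$ is just a repackaging of the paper's contradiction argument on the number of gadgets that achieve only three mistakes. The one small refinement is that you explicitly dispose of original nodes colored green by taking the red-versus-non-red bipartition, a point the paper leaves implicit.
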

\begin{proof}
  Consider first a cut in $G = (V,E)$ of size $K' \geq K$.
  Let $S_r$ and $S_b$ denote the two clusters in the corresponding bipartition of $G$, mapping to red and blue clusters.
  Consider each $(u,v) \in E$ in turn along with its 3-color gadget.
  If $(u,v) \in E$ is cut, cluster all nodes in its gadget with their vertical neighbor if $u \in S_b$ and $v \in S_r$, and cluster them with their horizontal neighbor if $u \in S_r$ and $v \in S_b$.
  Either way, this makes exactly 3 mistakes. If $(u,v)$ is \emph{not} cut, then label all nodes in the gadget red if $u,v \in S_r$, or blue if $u,v \in S_b$, which makes exactly 4 mistakes.
  The total number of mistakes in $G'$ is then $3K' + 4(\lvert E \rvert - K') = 4\lvert E \rvert - K' \leq 4 \lvert E \rvert - K$.

  Now start with $G'$ and consider a node coloring that makes $B' \leq B = 4\lvert E \rvert - K$ mistakes. 
  There are $\lvert E \rvert$ total 3-color gadgets in $G'$. 
  We claim that there must be at least $K$ of these gadgets at which only three mistakes are made. 
  If this were not the case, then assume exactly $H < K$ gadgets where 3 mistakes are made. 
  By \cref{lem:3color}, there are $\lvert E \rvert - H$ gadgets where at least 4 mistakes are made,
  so the total number of mistakes is
  $B' \geq 3H + 4(\lvert E \rvert - H) = 4\lvert E \rvert - H > 4\lvert E \rvert - K$,
  contradicting our initial assumption. Thus, by \cref{lem:3color}, there are at least $K$ edges $(u,v) \in E$ 
  where one of $\{u,v\}$ is red and the other is blue,
  and the maximum cut in $G$ is at least $K$.
\end{proof}

Consequently, if we can minimize \obj{} in polynomial time, 
we can solve the maximum cut decision problem in polynomial time,
and \obj{} is thus NP-hard.
As a natural next step, we turn to approximation algorithms.

\subsection{Algorithms based on LP relaxations}
\label{sec:LP}

We now develop approximation algorithms by relaxing an integer linear programming (ILP) formulation of our problem.
We design the algorithms for hypergraphs, with graphs as a special case.
Suppose we have an edge-labeled hypergraph $G = (V,E,C,\ell)$ with $C = \{1,\ldots,k\}$,
where $E_c = \{e \in E \given \ell[e] = c\}$.
The \obj{} objective can be written as the following ILP:
\begin{equation}
\label{eq:ccilp}
\begin{array}{lll}
\min & \sum_{c \in C} \sum_{e \in E_c} x_e & \\[1mm]
\text{s.t.} & \text{for all $v \in V$:} & \sum_{c = 1}^k x_v^c = k -1\\
& \text{for all $c \in C$, $e \in E_c$:} & x_v^c \leq x_e \text{ for all $v \in e$} \\
& x_v^c, x_e \in \{0,1\} &\text{for all $c \in C$, $v \in V$, $e \in E$.}
\end{array}
\end{equation}
In this ILP, $x_v^c = 1$ if node $v$ is \emph{not} assigned to category $c$, and is zero otherwise. 
The first constraint in~\eqref{eq:ccilp} ensures that $x_v^c = 0$ for exactly one category. 
The second constraint says that in any minimizer, $x_e = 0$ if and only if all nodes in $e$ are colored the same as $e$;
otherwise, $x_e = 1$.
If we relax the binary constraints in~\eqref{eq:ccilp}:
\begin{align*}
0 \leq x_v^c \leq 1,\quad 0 \leq x_e \leq 1,
\end{align*} 
then the ILP is just a linear program (LP) that can be solved in polynomial time. 

When $k = 2$, the constraint matrix of the LP relaxation is totally unimodular as it corresponds to the incidence matrix of a balanced signed graph~\cite{zaslavsky}. 
Thus, all basic feasible solutions for the LP satisfy the binary constraints of the original ILP~\eqref{eq:ccilp},
which is another proof that the two-category problem can be solved in polynomial time.

With more than two categories, the LP solution can be fractional,
and we cannot directly determine a node assignment from the LP solution. 
Nevertheless, solving the LP provides a lower bound on the optimal solution, 
and we show how to round the result to produce a clustering within a bounded factor of the lower bound.
\Cref{alg:2round} contains our rounding scheme, and the following theorem shows that it provides
a clustering within a factor of 2 from optimal.

\begin{algorithm}[tb]
\DontPrintSemicolon
\caption{A simple 2-approximation for \obj{} based on an LP relaxation.
  \Cref{alg:rand} details a more sophisticated rounding scheme.}
\label{alg:2round}
{\bfseries Input:} Labeled hypergraph $G = (V,E,C,\ell)$.\;
{\bfseries Output:} Label $\clustering[i]$ for each node $i \in V$. \;
Solve the LP-relaxation of ILP~\eqref{eq:ccilp}.\;
\For{$c \in C$}{
$S_c \leftarrow \{v \in V \given x_v^c < 1/2\}$.\;
\lFor{$i \in S_c$}{assign $\clustering[i] \leftarrow c$.}
}
Assign unlabeled nodes to an arbitrary $c \in C$.\;

\end{algorithm}

\begin{theorem}
	\Cref{alg:2round} returns at worst a 2-approximation to the \obj{} objective.
\end{theorem}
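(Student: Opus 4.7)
The plan is to compare the rounded integral cost edge-by-edge against the fractional LP cost, and show that every edge's contribution blows up by at most a factor of $2$. Since the LP optimum lower-bounds the integer optimum, this yields the claimed guarantee.

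First, I would verify that Algorithm~\ref{alg:2round} produces a well-defined assignment, i.e., that the sets $S_c$ are pairwise disjoint. Using the equality constraint $\sum_{c=1}^k x_v^c = k-1$, suppose two distinct categories $c_1 \neq c_2$ both satisfied $x_v^{c_1} < 1/2$ and $x_v^{c_2} < 1/2$. Then, since every coordinate is at most $1$, we would get $\sum_c x_v^c < \tfrac{1}{2} + \tfrac{1}{2} + (k-2) = k-1$, contradicting the constraint. Hence each node lies in at most one $S_c$, and the rule $\clustering[i]\leftarrow c$ for $i\in S_c$ is unambiguous; arbitrarily labeling leftover nodes is harmless as I explain below.

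The heart of the argument is a per-edge bound. Fix $e \in E_c$ with LP value $x_e$, and split into two cases. If $x_e < 1/2$, then the constraint $x_v^c \leq x_e$ forces $x_v^c < 1/2$ for every $v \in e$, so every node of $e$ lands in $S_c$ and receives label $c$; hence $e$ is not a mistake in the rounded clustering and contributes $0$ to the integer cost while the LP contributes $x_e \geq 0$. If $x_e \geq 1/2$, the rounded clustering incurs at most $1$ mistake on $e$, and this is bounded by $2 x_e$; note that even if $e$ contains an ``unlabeled'' node (all of whose $x_v^{c'} \geq 1/2$), the bound $x_v^c \leq x_e$ guarantees $x_e \geq 1/2$ anyway, so this case is absorbed here and the arbitrary assignment of such nodes is charged to the $2x_e$ budget. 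In both cases, the rounded cost on $e$ is at most $2 x_e$.

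Summing over $c \in C$ and $e \in E_c$ gives
\[
\cc(\clustering) \;=\; \sum_{c \in C} \sum_{e \in E_c} m_{\clustering}(e) \;\leq\; 2 \sum_{c \in C} \sum_{e \in E_c} x_e \;=\; 2 \cdot \mathrm{LP}^*,
\]
and since $\mathrm{LP}^* \leq \mathrm{OPT}$ (the LP is a relaxation of the ILP~\eqref{eq:ccilp} whose optimum equals $\cc$), we obtain $\cc(\clustering) \leq 2\cdot\mathrm{OPT}$. I do not anticipate a serious obstacle: the main subtlety is confirming that the $\sum_c x_v^c = k-1$ equality prevents overlapping assignments and that ``unlabeled'' nodes only occur inside edges with $x_e \geq 1/2$, both of which follow directly from the LP constraints $x_v^c \leq x_e$.
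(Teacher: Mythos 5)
Your proposal is correct and follows essentially the same argument as the paper: use the equality constraint $\sum_c x_v^c = k-1$ to show the sets $S_c$ are disjoint, then observe that any edge $e \in E_c$ on which the rounding makes a mistake must contain a node $v$ with $x_v^c \geq 1/2$, so the constraint $x_v^c \leq x_e$ forces the LP to pay at least $1/2$ for that edge. Your treatment is slightly more explicit about the unlabeled nodes and the case split on $x_e$, but the underlying reasoning is identical to the paper's.
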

\begin{proof}
	First, for any $v \in V$, $x_v^c < 1/2$ for at most one category $c \in C$
	in the solution. If this were not the case, there would exist two colors $a$ and $b$ such that 
	$x_v^{a} < 1/2$ and $x_v^b < 1/2$ and
	\[ \textstyle
	 \sum_{c = 1}^k x_v^c = x_v^a + x_v^b + \sum_{c' \in C \backslash\{a,b\}} x_v^{c'} < 1 + k - 2 = k - 1,
	 \]
	 which violates the first constraint of the LP relaxation.
         Therefore, each node will be assigned to at most one category. 
	Consider any $e \in E_c$ for which all nodes are not assigned to $c$. 
	This means that there exists at least one node $v \in e$ such that $x_v^c \geq 1/2$. 
	Thus, the Algorithm incurs a penalty of one for this edge, 
	but the LP relaxation pays a penalty of $x_e \geq x_v^c \geq 1/2$.
	Therefore, every edge mistake will be accounted for within a factor of 2.
\end{proof}

We can get better approximations in expectation with a more sophisticated randomized rounding algorithm (\cref{alg:rand}).
In this approach, we form sets $S_c^t$ based on a threshold parameter $t$ so that each node may be included in more than one set.
To produce a valid clustering, we first generate a random permutation of colors to indicate an (arbitrary) priority of one color over another. 
For any $v \in V$ contained in more than one set $S_c^t$, we assign $v$ to the cluster with highest priority.
By carefully setting the parameter $t$, this approach has better guarantees than \cref{alg:2round}.

\begin{algorithm}[tb]
\DontPrintSemicolon
\caption{LP relaxation for \obj{} with a randomized rounding scheme.
\Cref{thm:randround} gives approximation guarantees based on $t$.}
\label{alg:rand}
{\bfseries Input:} Labeled hypergraph $G = (V,E,C = \{1,2,\ldots,k\}, \ell)$; 
			rounding parameter $t \in \left[1/2, 2/3\right]$. \;
{\bfseries Output:} Label $\clustering[i]$ for each node $i \in V$. \;
Solve the LP-relaxation of ILP~\eqref{eq:ccilp}. \;
$\pi \leftarrow$ uniform random permutation of $\{1,2, \hdots, k\}$.\;
\For{$c = \pi_1, \ldots, \pi_k$}{
	$S_c \leftarrow \{v \in V \given x_v^c < t\}$.\;
	\lFor{$i \in S_{c}$}{$\clustering[i] \leftarrow \pi(c)$.}
}
Assign unlabeled nodes to an arbitrary $c \in C$.\;
\end{algorithm}

\begin{theorem}\label{thm:randround}
	If $t = k / (2k-1)$,
	\cref{alg:rand} returns an at worst $(2 - 1/k)$-approximation for \obj{} in expectation.
	And if $t = (r+1)/(2r+1)$, 
	\cref{alg:rand} returns an at worst $(2 - 1/(1+r))$-approximation in expectation.
\end{theorem}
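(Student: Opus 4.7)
The plan is to bound, for each labeled edge $e \in E_c$, the probability $p_e$ that \cref{alg:rand} assigns some vertex of $e$ to a color other than $c$, showing $p_e \leq \alpha \cdot x_e$, where $\alpha$ is the claimed approximation factor. Since the ILP in~\eqref{eq:ccilp} encodes \cc{}, linearity of expectation then yields expected cost $\sum_e p_e \leq \alpha \sum_e x_e \leq \alpha \cdot \textnormal{OPT}$. Both threshold choices are calibrated so that $\alpha t = 1$; this identity drives the rest of the argument.

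If $x_e \geq t$, the trivial bound $p_e \leq 1 = \alpha t \leq \alpha x_e$ finishes. Assume now $x_e < t$. The LP constraint $x_v^c \leq x_e$ then forces $x_v^c < t$ for every $v \in e$, so $c$ lies in $A_v := \{c' \in C : x_v^{c'} < t\}$ (in particular $A_v \neq \emptyset$). Under the uniform random permutation, $v$ is assigned to the highest-priority color in $A_v$, so $e$ fails to be a mistake exactly when $c$ has highest priority within $A_e := \bigcup_{v \in e} A_v$. By symmetry this happens with probability $1/|A_e|$, so $p_e = 1 - 1/|A_e|$.

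The technical core is bounding $|A_e|$. A counting argument on $\sum_{c' \in C} x_v^{c'} = k - 1$---the $|A_v|$ small terms contribute less than $t$ each, the remaining $k - |A_v|$ at most $1$ each---yields both $|A_v|(1-t) < 1$ and $x_v^c > (|A_v|-1)(1-t)$. For either choice of $t$, $1/(1-t) < 3$, so $|A_v| \leq 2$ and each node in $e$ contributes at most one ``competing'' color to $A_e$. Writing $s = |A_e| - 1$, this gives $s \leq \min(r, k-1)$; moreover, if $s \geq 1$ then some $v$ has $|A_v| = 2$, forcing $x_e \geq x_v^c > 1-t$.

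To close, $\alpha t = 1$ rewrites as $\alpha(1-t) = \alpha - 1$, which equals $(k-1)/k$ for $\alpha = 2 - 1/k$ and $r/(r+1)$ for $\alpha = 2 - 1/(r+1)$. If $s = 0$ then $p_e = 0 \leq \alpha x_e$. Otherwise $p_e = s/(s+1)$ is increasing in $s$, so $s \leq k - 1$ gives $p_e \leq (k-1)/k = \alpha(1-t) \leq \alpha x_e$ for the first threshold, and $s \leq r$ gives $p_e \leq r/(r+1) = \alpha(1-t) \leq \alpha x_e$ for the second. The main obstacle is the calibration: the two values of $t$ are chosen precisely so that the worst-case ratio $s/(s+1)$ lands on $\alpha - 1$ exactly at the boundary $x_e = 1 - t$, which is the only regime where the bound could be tight.
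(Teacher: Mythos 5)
Your proof is correct and follows essentially the same route as the paper's: the same two cases on $x_e \geq t$ versus $x_e < t$, the same observation that each node lies in at most two candidate sets so a hyperedge faces at most $\min(r,k-1)$ competing colors, and the same charge of the mistake probability $s/(s+1)$ against the LP value $x_e > 1-t$ via the random priority permutation. The only (harmless) difference is presentational: you verify that the given thresholds satisfy $\alpha t = 1$ and $\alpha(1-t)=\alpha-1$, whereas the paper derives $t$ by balancing the two cases' ratios $1/t$ and $\tfrac{B_i}{(1-t)(B_i+1)}$.
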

\begin{proof}
For the choices of $t$ listed in the statement of the theorem, 
$t \in [1/2, 2/3]$ as long as $r \geq 2$ and $k \geq 2$, which is always true.
We say that color $c$ \emph{wants} node $v$ if $v \in S_c$,
but this does not automatically mean that $v$ will be colored as $c$.
For any $v \in V$, there exist at most two colors that want $v$. 
If $v$ were wanted by more than two colors, this would mean $v \in S_a \cap S_b \cap S_c$ for three distinct colors $a,b,c$.
This leads to a violation of the first constraint in~\eqref{eq:ccilp}:
\begin{align*}
	x_v^a + x_v^b  + x_v^c + \sum_{i: i \notin \{a,b,c\}} x_v^i < 3t + (k-3) \leq 2 + (k-3) = (k-1).
\end{align*}

Consider an arbitrary $t \in (1/2, 2/3)$.
We can bound the expected number of mistakes made by \cref{alg:rand} and pay for them individually in terms of the LP lower bound.
To do this, we consider a single hyperedge $e \in E_c$ with color $c$ and bound the probability of making a mistake and the LP cost of this hyperedge.

\emph{Case 1: $x_e \geq t$.}
In this case, we are guaranteed to make a mistake at edge $e$, since $x_e \geq t$ implies there is some node $v \in e$ such that $x_v^c \geq t$, and so $v \notin S_c$. 
However, because the LP value at this edge is $x_e \geq t$, we pay for our mistake within a factor $1/t$.

\emph{Case 2: $x_e < t$.}
Now, color $c$ wants every node in the hyperedge $e \in E_c$. 
If no other colors want any node $v \in e$, then \cref{alg:rand} will not make a mistake at $e$, and we have no mistake to account for. 
Assume then that there is some node $v \in e$ and a color $c' \neq c$ such that $c'$ wants $v$.
This implies that $x_{v}^{c'} < t$, from which we have
that $x_{v}^{c} \geq 1 - x_{v}^{c'}  > 1-t$ (to satisfy the first inequality in~\eqref{eq:ccilp}).
Thus,
\begin{equation}
\label{lpcost}
x_e \geq x_{v}^{c'} > 1-t.
\end{equation}
This gives a lower bound of $1 - t$ on the contribution of the LP objective at edge $e$.

In the worst case, each $v \in e$ may be wanted by a \emph{different} $c' \neq c$, and
the number of colors other than $c$ that want some node in $e$ is bounded above by $B_1 = k - 1$ and $B_2 = r$. 
We avoid a mistake at $e$ if and only if $c$ has higher priority than all of the alternative colors, where priority is established by the random permutation $\pi$.
Thus,
\begin{equation}
\label{probme}
\textstyle \textbf{Pr}[\text{mistake at e} \given x_e < t] \leq \frac{B_i}{B_i+1} = \min \left \{  \frac{r}{r+1}, \frac{k-1}{k} \right\}.
\end{equation}

Recall from~\eqref{lpcost} that the LP pays $x_e > 1-t$. Therefore, the expected cost at a hyperedge $e \in E_c$ satisfying $x_e < t$ is at most $\frac{B_i}{(1-t)(B_i+1)}$ in expectation. 
Taking the worst approximation factor from Case 1 and Case 2, 
\cref{alg:rand} will in expectation provide an approximation factor of
$\max \left \{ \frac{1}{t}, \frac{B_i}{(1-t)(B_i+1)} \right \}$.
This will be minimized when the approximation bounds from Cases 1 and 2 are equal,
which occurs when $t = \frac{B_i+1}{2B_i+1}$.
If $B_i = k-1$, then $t = \frac{k-1}{2k - 1}$ and the expected approximation factor is $2 - 1 / k$.
And if $B_i = r$, then $t = \frac{r}{2r+1}$ and the expected approximation factor is $2 - 1 / (r + 1)$.
\end{proof}
For the graph case ($r = 2$), this theorem implies a $\frac53$-approximation for \obj{} with any number of categories.

\xhdr{Computational considerations}
The linear program has $O(\lvert E \rvert)$ variables
and sparse constraints, which written as a matrix inequality would have
$O(T)$ non-zeros, where $T$ is again the sum of hyperedge degrees.
Improving the best theoretical running times for solving linear programs is an active area of research~\cite{lee2015efficient,cohen2019solving},
but practical performance of solving linear programs is often much different than worst-case guarantees.
In \Cref{sec:experiments}, we show that a high-performance
LP solver from Gurobi is extremely efficient in practice, finding solutions in seconds on hypergraphs with
several categories and tens of thousands of hyperedges in tens of seconds.

\subsection{Algorithms based on multiway cut}\label{sec:mwc}
We now provide alternative approximations based on multiway cut,
similar to the reductions from \cref{sec:twocats}.
Again, we develop this technique for general hypergraphs and graphs are a special case.

Suppose we have an edge-labeled hypergraph $G = (V,E,C,\ell)$.
We construct a new graph $G' = (V', E')$ as follows.
First, introduce a terminal node $v_c$ for each category $c \in C$, so that $V' = V \cup \{ v_c \given c \in C \}$.
Second, for each hyperedge $e=\{v_1,\ldots,v_r\} \in E$,
add a clique on nodes $v_1, \ldots, v_r, v_{\ell[e]}$ to $E'$, where each
edge in the clique has weight $1/r$.
(Overlapping cliques are just additive on the weights.)

The multiway cut objective
is the number of cut edges in any partition of the nodes into $k$
clusters such that each cluster contains exactly one of the terminal nodes.
We can associate each cluster with a category,
and any clustering $\clustering$ of nodes in \obj{} for $G$
can be mapped to a candidate partition for multiway cut in $G'$.
Let $\mwc(\clustering)$ denote the value of the multiway cut objective
for the clustering $\clustering$.
The next result relates multiway cut to \obj{}.
\begin{theorem}\label{thm:mwc}
For any clustering $\clustering$,
\[
\cc(\clustering) \leq \mwc(\clustering) \leq \frac{r + 1}{2} \cc(\clustering).
\]
\end{theorem}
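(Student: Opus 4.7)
The plan is to prove both inequalities simultaneously and hyperedge-by-hyperedge, exploiting the fact that the multiway-cut construction attaches one clique $K_e$ per hyperedge $e$ and that weights on overlapping cliques just add. First I would write $\mwc(\clustering) = \sum_{e \in E} \mu_e(\clustering)$, where $\mu_e(\clustering)$ is the cut contribution coming from $K_e$ alone, and $\cc(\clustering) = \sum_e m_\clustering(e)$. The theorem then reduces to the per-hyperedge bound
\[
m_\clustering(e) \;\leq\; \mu_e(\clustering) \;\leq\; \tfrac{r+1}{2}\, m_\clustering(e),
\]
which I sum over $e \in E$ at the end.

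Next I would fix an arbitrary hyperedge $e \in E_c$ and record the key modeling observation: in the multiway-cut problem the terminal $v_c$ is pinned to the cluster labeled $c$, so all $r+1$ nodes of $K_e$ sit in a single cluster precisely when $\clustering[v_i] = c$ for every $v_i \in e$, i.e.\ precisely when $m_\clustering(e) = 0$. In that case $\mu_e(\clustering) = 0$ and both bounds hold trivially. The remaining case is $m_\clustering(e) = 1$: then the $r+1$ nodes of $K_e$ split into cluster classes of sizes $a_1,a_2,\ldots$ with $\sum_i a_i = r+1$ and at least two of the $a_i$ nonzero, and the contribution to $\mwc$ is
\[
\mu_e(\clustering) \;=\; \frac{1}{r}\left(\binom{r+1}{2} - \sum_i \binom{a_i}{2}\right).
\]

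The upper bound $\mu_e(\clustering) \leq (r+1)/2$ is then immediate, since this is exactly the total edge weight of $K_e$ (drop the $\binom{a_i}{2}$ terms). The substantive step, and what I would flag as the main obstacle, is the lower bound $\mu_e(\clustering) \geq 1$. I would derive it by a short concavity argument for $\binom{\cdot}{2}$: among partitions of $r+1$ into at least two nonempty parts, $\sum_i \binom{a_i}{2}$ is maximized at the most unbalanced split $(r,1)$, with value $\binom{r}{2}$, so $\mu_e(\clustering) \geq \tfrac{1}{r}\bigl(\binom{r+1}{2} - \binom{r}{2}\bigr) = 1$. Summing the per-$e$ inequality over $E$ delivers the theorem. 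In hindsight, the edge weight $1/r$ in the construction is calibrated exactly so that the minimum nontrivial cut in $K_{r+1}$ equals $1$ and the maximum cut equals $(r+1)/2$, which is what makes both constants in the statement land cleanly.
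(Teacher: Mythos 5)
Your proof is correct and takes essentially the same route as the paper: decompose $\mwc$ into per-hyperedge clique contributions by additivity, then bound each clique's cut between $1$ and $\tfrac{r+1}{2}$ times the $0$/$1$ mistake indicator. The only difference is that you make the lower bound fully rigorous via the exact formula $\mu_e = \tfrac{1}{r}\bigl(\binom{r+1}{2} - \sum_i \binom{a_i}{2}\bigr)$ and a convexity argument, where the paper simply asserts that the extremal configuration isolates a single vertex.
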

\begin{proof}
Let $e = \{v_1, \ldots, v_r\}$ with label $c = \ell[e]$ be a hyperedge in $G$.
We can show that the bounds hold when considering the associated clique in $G'$
and then apply additivity.
First, if $e$ is not a mistake in the \obj{},
then no edges are cut in the clique.
If $e$ is a mistake in the \obj{},
then there are some edges cut in the associated clique.
The smallest possible contribution to the multiway cut objective occurs
when all but one node is assigned to $c$.
Without loss of generality, consider this to be $v_1$, which is
in $r$ cut edges: $(r-1)$ corresponding to the edges from $v_1$ to other nodes in the hyperedge, plus one for the edge from $v_1$ to the terminal $v_c$. 
Each of the $r$ cut edges has weight $1/r$, so the multiway cut contribution is 1.

The largest possible cut occurs when all nodes in $e$ are colored differently from $e$.
In this case, the edges incident to each node in the clique are all cut.
For any one of these nodes, the sum of edge weights incident to that node equals 1 by the same arguments as above.
This cost is incurred for each of the $r$ nodes in the hyperedge plus the terminal node $v_c$, 
for a total weight of $r + 1$. Since each edge is counted twice, the actual penalty is $(r+1)/2$.
\end{proof}

\xhdr{Computational considerations}
Minimizing the multiway cut objective is NP-hard~\cite{Dahlhaus94thecomplexity},
but there are many approximation algorithms.
\Cref{thm:mwc} implies that any $p$-approximation for multiway cut
provides a $p(r + 1)/2$-approximation for \obj{}.
For example, the simple isolating cuts heuristic yields a $\frac{r + 1}{2}(2 - \frac{2}{k})$-approximation,
and more sophisticated algorithms provide a $\frac{r+1}{2}(\frac{3}{2} - \frac{1}{k})$-approximation~\cite{calinescu2000}.
For our experiments, we use the isolating cut approach, which
solves $O(k)$ maximum flow problems on a graph with
$O(r \lvert E \rvert)$ vertices and $O(r^2\lvert E \rvert)$ edges. This can be expensive in practice.
We will find that the LP relaxation performs better in terms of solution
quality and running time.

\xhdr{A node-weighted multiway cut reduction}
We also provide an approximation based on a {\it direct} reduction to a node-weighted multiway cut (NWMC) problem
that is of theoretical interest.
As above, suppose we have an edge-labeled hypergraph $G = (V,E,C,\ell)$.
We construct a new graph $G' = (V', E')$ as follows.
First, introduce a terminal node $v_c$ for each category $c \in C$, so that $V' = V \cup \{ v_c \given c \in C \}$.
Assign infinite weights to all nodes in $V'$.
Next, for each hyperedge $e=\{v_1,\ldots,v_r\} \in E$, add an auxiliary node $v_e$ with weight 1.
Next, append edges $(v_e,v_1),\ldots,(v_e,v_r)$ as well as $(v_c,v_e)$ for $\ell(e) = c$ to $E'$.
It straightforward to check that deleting $v_e$ corresponds to making a mistake at hyperedge $e$.
Thus an optimizer of NWMC on $G'$ is also an optimizer of \obj{} on $G$.

Solving NWMC is also NP-hard~\cite{garg2004multiway},
and there are again well-known approximation algorithms.
The above discussion implies any $p$-approximation to NWMC also
provides a $p$-approximation for \obj{}.
For example, an LP-based algorithm has a $2(1-1/k)$-approximation~\cite{garg2004multiway}.
This approximation is better but the LPs are too large to be practical;
however, the improvement of a direct algorithm suggests room for better theoretical results.

\subsection{Approximation through a linear objective}\label{sec:linear_mv}
The \obj{} objective assigns a penalty of 1 regardless of the {\it proportion} of the nodes in a hyperedge which are clustered away from hyperedge's color.
Although useful, we might consider alternative penalties that value the \emph{extent} to which each hyperedge is satisfied in the final clustering.
One natural penalty for a hyperedge of color $c$ is the number of nodes within that hyperedge that are not clustered into that color.
With such a ``linear'' mistake function, we define the Categorical Node Clustering Objective as
\[
\textstyle CatNodeClus(Y)=\sum_{e\in E}m'_Y(e), \text{ where } m'_Y(e)=\sum_{i\in e}I_{Y[i]\neq\ell{}(e)}.
\]
It turns out that this objective is optimized with a simple majority vote algorithm
that assigns a node to the majority color of all hyperedges that conatin it.
\begin{theorem}
The majority vote algorithm yields an optimizer of the Categorical Node Clustering (linear) objective.
\end{theorem}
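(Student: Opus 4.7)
The plan is to exploit the fact that the linear objective decomposes over nodes, and then observe that each node's contribution is minimized independently by a majority choice.

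First, I would swap the order of summation in the definition:
\begin{equation*}
CatNodeClus(Y) \;=\; \sum_{e \in E}\sum_{i \in e} I_{Y[i]\neq \ell(e)} \;=\; \sum_{i \in V} \sum_{e \in E : i \in e} I_{Y[i]\neq \ell(e)}.
\end{equation*}
The crucial feature of this rewriting is that the inner sum for a given node $i$ depends only on the single value $Y[i]$, and is completely decoupled from the labels assigned to the other nodes. Therefore, minimizing $CatNodeClus(Y)$ over all clusterings $Y$ reduces to minimizing each inner sum separately.

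Next, I would argue that for a fixed node $i$, choosing $Y[i] = c$ contributes exactly $|\{e \ni i : \ell(e) \neq c\}|$ to the objective, which equals $\deg(i) - |\{e \ni i : \ell(e) = c\}|$, where $\deg(i)$ is the number of hyperedges containing $i$. Since $\deg(i)$ does not depend on the choice of $c$, minimizing the contribution is equivalent to maximizing $|\{e \ni i : \ell(e) = c\}|$ over $c \in C$. By definition, this maximum is attained by taking $c$ to be a majority color among the hyperedges incident to $i$, which is precisely what the majority vote algorithm does.

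Finally, since the objective is a sum of per-node terms and majority vote minimizes each term independently, the clustering produced by majority vote achieves the minimum of the overall sum, and hence is an optimizer. No obstacle is expected here; the only subtle point is noting that ties can be broken arbitrarily since any majority color attains the same per-node minimum.
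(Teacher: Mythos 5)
Your proof is correct and follows essentially the same route as the paper's: both rest on the observation that the linear objective decomposes into independent per-node terms, each of which is minimized by choosing a majority color among the incident hyperedges. You simply make explicit the summation swap that the paper's terser argument leaves implicit, and your note about arbitrary tie-breaking is a fine (if minor) addition.
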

\begin{proof}
  Suppose node $u$ is contained in $J_i$ hyperedges of color $i$.
  Without loss of generality, assume $J_1 \geq \ldots \geq J_k$.
  The cost of assigning $u$ to $c$ is $C_c=\sum_{j\neq c}J_j$,
  which is minimized for $c = 1$.
\end{proof}
In Section~\ref{sec:experiments}, we will see that the majority vote solution provides a good approximation to the optimizer of the \obj{} objective.
The reason is that the cost of a hyperedge under the linear objective is at most $r$ while that cost under the \obj{} objective is just 1,
which makes majority vote an $r$-approximation algorithm.
\begin{theorem}
The majority vote algorithm provides an \\$r$-approximation for \obj{}.
\end{theorem}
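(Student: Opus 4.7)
The plan is to sandwich the two objectives pointwise and then exploit optimality of majority vote for the linear one. For any clustering $Y$ and any hyperedge $e$ of color $\ell(e)$, observe that $m_Y(e) \in \{0,1\}$ counts whether \emph{at least one} node in $e$ disagrees with $\ell(e)$, while $m'_Y(e) \in \{0,1,\ldots,r\}$ counts \emph{how many} nodes in $e$ disagree. So I would first establish the pointwise bounds
\begin{equation*}
m_Y(e) \;\leq\; m'_Y(e) \;\leq\; r \cdot m_Y(e),
\end{equation*}
where the left inequality is immediate (a mistake contributes at least one disagreeing node), and the right inequality follows because $m'_Y(e) \leq r$ trivially and $m'_Y(e) = 0$ exactly when $m_Y(e)=0$. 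Summing over $e \in E$ gives $\cc(Y) \leq CatNodeClus(Y) \leq r \cdot \cc(Y)$ for every clustering $Y$.

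Next I would let $Y^{MV}$ denote the majority-vote clustering and $Y^*$ an optimizer of $\cc$. By the preceding theorem, $Y^{MV}$ minimizes $CatNodeClus$, so in particular $CatNodeClus(Y^{MV}) \leq CatNodeClus(Y^*)$. Chaining the inequalities yields
\begin{equation*}
\cc(Y^{MV}) \;\leq\; CatNodeClus(Y^{MV}) \;\leq\; CatNodeClus(Y^*) \;\leq\; r \cdot \cc(Y^*),
\end{equation*}
which is exactly the claimed $r$-approximation.

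There is no real obstacle here: the argument is entirely a sandwiching/exchange argument between the two objectives, and the only slightly subtle point is making sure the right-hand inequality $m'_Y(e) \leq r\, m_Y(e)$ handles the case $m_Y(e)=0$ correctly (both sides are zero). Since the lemma about majority vote optimizing $CatNodeClus$ is already in hand, the whole proof reduces to these two pointwise comparisons plus a one-line chain of inequalities.
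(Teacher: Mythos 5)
Your proof is correct and follows essentially the same route the paper sketches: the paper's one-line justification (``the cost of a hyperedge under the linear objective is at most $r$ while that cost under the \obj{} objective is just 1'') is exactly your pointwise sandwich $m_Y(e) \leq m'_Y(e) \leq r\, m_Y(e)$, combined with the optimality of majority vote for the linear objective. Your write-up simply makes the chain of inequalities explicit, which the paper leaves implicit.
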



\begin{table*}[t!]
	\caption{Summary statistics of datasets --- number of nodes $\lvert V \rvert$, number of (hyper)edges $\lvert E \rvert$, 
	maximum hyperedge size $r$, and number of categories $k$ --- along with \obj{}
	performance for the algorithms \emph{LP-round} (LP), \emph{Majority Vote} (MV), \emph{Cat-IsoCut} (IC), 
	\emph{ChromaticBalls} (CB) and \emph{LazyChromaticBalls} (LCB). 
	Performance is listed in terms of the approximation guarantee given by the LP lower bound (lower is better)
	and in terms of the edge satisfaction, which is the fraction of edges that are \emph{not mistakes}
	(higher is better; see \cref{eq:chromec}).
	Our LP method performs the best overall and can even find exactly (or nearly) optimal solutions
	to the NP-hard objective by matching the lower bound.
	We also report the running times for rough comparison, though our implementations are not optimized for efficiency. Due to its simplicity, MV is extremely fast.
	}
	\label{tab:allexp1}
	\centering
	\scalebox{0.95}{\begin{tabular}{l     l l l l     l l l l l     l l l l l    l l l ll}
		\toprule
		&&&&& \multicolumn{5}{c}{\textbf{Approx. Guarantee}} & \multicolumn{5}{c}{\textbf{Edge Satisfaction}} & \multicolumn{5}{c}{\textbf{Runtime (in seconds)}} \\ 
		\cmidrule(lr){6-10} \cmidrule(lr){11-15}  \cmidrule(lr){16-20}
		\emph{Dataset} & $\lvert V \rvert$ & $\lvert E \rvert$ & $r$ & $k$ & LP  & MV   & IC   & CB   & LCB  & LP   & MV   & IC   & CB   & LCB  & LP & MV& IC & CB & LCB \\
		\midrule
Brain & 638 & 21180 & 2 & 2 &1.0 & 1.01 & 1.27 & 1.56 & 1.41 & 0.64 & 0.64 & 0.55 & 0.44 & 0.5 & 1.8 & 0.0 & 1.9 & 0.4 & 0.8\\
MAG-10 & 80198 & 51889 & 25 & 10 &1.0 & 1.18 & 1.37 & 1.44 & 1.35 & 0.62 & 0.55 & 0.48 & 0.45 & 0.49 & 51 & 0.1 & 203 & 333 & 699\\
Cooking & 6714 & 39774 & 65 & 20 &1.0 & 1.21 & 1.21 & 1.23 & 1.24 & 0.2 & 0.03 & 0.03 & 0.01 & 0.01 & 72 & 0.0 & 1223 & 4.6 & 6.7\\
DAWN & 2109 & 87104 & 22 & 10 &1.0 & 1.09 & 1.0 & 1.31 & 1.15 & 0.53 & 0.48 & 0.53 & 0.38 & 0.46 & 13 & 0.0 & 190 & 0.3 & 0.4\\
Walmart-Trips & 88837 & 65898 & 25 & 44 & 1.0 & 1.2 & 1.19 & 1.26 & 1.26 & 0.24 & 0.09 & 0.09 & 0.04 & 0.05 & 7686 & 0.2& 68801 & 493 & 1503\\
		\bottomrule
	\end{tabular}}
\end{table*} 

\section{Experiments}\label{sec:experiments}
We now run four types of numerical experiments to demonstrate our methodology.
First, we show that our algorithms indeed work well
on a broad range of datasets at optimizing our objective function and discover
that our LP relaxation tends be extremely effective in practice, often finding an
optimal solution (i.e., matching the lower bound). After, we show that our
approach is superior to competing baselines in categorical community detection
experiments where edges are colored to signal same-community membership.
Next, we show how to use timestamped edge information as a categorical edge label, and demonstrate that our method
can find clusters that preserve temporal information better than methods that
only look at graph topology, without sacrificing performance on topological
metrics. Finally, we present a case study on a network of cooking ingredients
and recipes to show that our methods can also be used for exploratory data
analysis. Our code and datasets are available at~\url{https://github.com/nveldt/CategoricalEdgeClustering}.

\subsection{Analysis on Real Graphs and Hypergraphs}
\label{sec:exp1}
We first evaluate our methods on several real-world edge-labeled graphs and
hypergraphs in terms of \obj{}. The purpose of these experiments is to show
that our methods can optimize the objective quickly and accurately and
to demonstrate that our methods find global categorical clustering
structure better than natural baseline algorithms. All experiments ran on a
laptop with a 2.2 GHz Intel Core i7 processor and 8 GB of RAM.
We implemented our algorithms in Julia, using Gurobi software to solve the linear programs.

\xhdr{Datasets}
\Cref{tab:allexp1} provides summary statistics of the datasets we use, and
we briefly describe them.
\emph{Brain}~\cite{crossley2013cognitive} is a graph where nodes represent brain regions from an MRI.
There are two edge categories: one for connecting regions with high fMRI correlation
and one for connecting regions with similar activation patterns.
In the Drug Abuse Warning Network (\emph{DAWN})~\cite{DAWN-data},
nodes are drugs, hyperedges are
combinations of drugs taken by a patient prior to an emergency room visit, and
edge categories indicate the patient disposition (e.g., ``sent home'' or ``surgery'').
The \emph{MAG-10} network is a subset of the Microsoft Academic
Graph~\cite{MAG-data} where nodes are authors, hyperedges correspond to a
publication from those authors, and there are 10 edge categories which denote
the computer science conference publication venue (e.g., ``WWW'' or ``KDD'').
If the same set of authors published at more than one conference, we
used the most common venue as the category, discarding cases where there is a tie.
In the \emph{Cooking} dataset~\cite{cooking-data},
nodes are food ingredients, hyperedges are
recipes made from combining multiple ingredients, and categories indicate cuisine
(e.g., ``Southern-US'' or ``Indian'').
Finally, the \emph{Walmart-Trips} dataset is made up of products (nodes), groups of
products purchased in a single shopping trip (hyperedges), and categories are
44 unique ``trip types'' classified by Walmart~\cite{walmart-data}.

\xhdr{Algorithms}
We use two algorithms that we developed in
\cref{sec:morecats}. The first is the simple 2-approximation rounding scheme
outlined in \cref{alg:2round}, which we refer to as \emph{LP-round} (\emph{LP}) (in
practice, this performs as well as the more sophisticated algorithm in
\cref{alg:rand} and has the added benefit of being deterministic).  The second
is \emph{Cat-IsoCut} (\emph{IC}), which runs the standard isolating cut
heuristic~\cite{Dahlhaus94thecomplexity} on an instance of multiway cut derived
from the \obj{} problem, as outlined in \cref{sec:mwc}.

The first baseline we compare against is \emph{Majority Vote} (\emph{MV})
discussed in \cref{sec:linear_mv}:
node $i$ is assigned to category $c$ if $c$ is the most
common edge type in which $i$ participates. The \emph{MV} result is also
the default cluster assignment for \emph{IC}, since in practice
this method leaves some nodes unattached from all terminal nodes.

The other baselines are \emph{Chromatic Balls} (\emph{CB}) and \emph{Lazy
  Chromatic Balls} (\emph{LCB}) --- two algorithms for chromatic
correlation clustering~\cite{Bonchi2015ccc}. These methods repeatedly select
an unclustered edge and greedily grow a cluster around it by adding nodes
that share edges with the same label. Unlike our methods, \emph{CB} and
\emph{LCB} distinguish between category (color) assignment and cluster
assignment: two nodes may be colored the same but placed in different
clusters. To provide a uniform comparison among methods, we merge distinct
clusters of the same category into one larger cluster. These methods are
\emph{not} designed for hypergraph clustering, but we still use them for
comparison by reducing a hypergraph to an edge-labeled graph, where nodes
$i$ and $j$ share an edge in category $c$ if they appear together in more
hyperedges of category $c$ than any other.

\xhdr{Results}
Table~\ref{tab:allexp1} reports how well each algorithm solves the \obj{} objective.
We report the approximation guarantee (the ratio between each algorithm's output and the LP lower bound),
as well as the \emph{edge satisfaction}, which is the fraction of hyperedges that end up inside a cluster with the correct label.
Maximizing edge satisfaction is equivalent to minimizing the number of edge label mistakes but provides an intuitive way to interpret and analyze our results. 
High edge satisfaction scores imply that a dataset is indeed characterized by large groups of objects that tend to interact in a certain way with other members of the same group.
A low satisfaction score indicates that a single label for each node may be insufficient to capture the intricacies of the data.

In all cases, the LP solution is integral or nearly integral, indicating that \emph{LP} does an extremely good job solving the original NP-hard objective, often finding an exactly-optimal solution. 
As a result, it outperforms all other methods on all datasets. Furthermore, on nearly all datasets, we can solve the LP within a few seconds or a few minutes.
\emph{Walmart} is the exception--given the large number of categories, the LP contains nearly 4 million variables, and far more constraints.
Other baseline algorithms can be faster, but they do not perform as well in solving the objective. 

The high edge satisfaction scores indicate that our method does the best job identifying sets of nodes which \emph{as a group} tend to participate in one specific type of interaction. In contrast, the \emph{MV} algorithm identifies nodes that {individually} exhibit a certain behavior, but the method does not necessarily form clusters of nodes that as a group interact in a similar way.
Because our \emph{LP} method outperforms our \emph{IC} approach 
on all datasets in terms of both speed and accuracy, in the remaining experiments we focus only on comparing \emph{LP} against other competing algorithms.

\subsection{Categorical Edge Community Detection}
Next we demonstrate the superiority of \emph{LP} in detecting communities of nodes with the same node labels (i.e., \emph{categorical communities}), based on labeled edges between nodes. We perform experiments on synthetic edge-labeled graphs, as well as two real-world datasets, where we reveal edge labels indicative of the ground truth node labels and see how well we can recover the node labels.

\xhdr{Synthetic Model}
We use the synthetic random graph model of Bonchi et al.\ for chromatic correlation clustering~\cite{Bonchi2015ccc}.
A user specifies the number of nodes $n$, colors $L$, and clusters $K$, as well as edge parameters $p$, $q$, and $w$.
The model first assigns nodes to clusters uniformly at random, and then assigns clusters to colors uniformly at random.
(Due to the random assignment, some clusters and colors may not be sampled. Thus, $K$ and $L$ are upper bounds on the number of distinct clusters and unique colors.)
For nodes $i$ and $j$ in the same cluster, the model connects them with an edge with probability $p$.
With probability $1-w$, the edge is the same color as $i$ and $j$. Otherwise, it is a uniform random color. 
If $i$ and $j$ are in different clusters, an edge is drawn with probability $q$ and given a uniform random color. 
We will also use a generalization of this model to synthetic $r$-uniform hypergraphs.
The difference is that we assign colored hyperedges to $r$-tuples of the $n$ nodes, rather than just pairs, and we assign each cluster to a unique color.

\xhdr{Synthetic Graph Results}
We set up two experiments, where performance is measured by the fraction of nodes
placed in the correct cluster (node label accuracy).
In the first, we form graphs with $n = 1000$, $p = 0.05$, and $q = 0.01$, fixing $L = K = 15$
(which in practice leads to graphs with 15 clusters and typically between 8 and 12 distinct edge and cluster colors).
We then vary the noise parameter $w$ from $0$ to $0.75$ in increments of $0.05$. 
\Cref{fig:noise} reports the median accuracy over 5 trials of each method for each value of $w$.
In the second, we fix $w = 0.2$, and vary the number of clusters $K$ from 5 to 50 in increments of 5 with $L = K$.
\Cref{fig:clusters} reports the median accuracy over 5 trials for each value of $K$.

For our first two experiments, we additionally found that our LP algorithm similarly outperformed other methods in terms of cluster identification scores such as Adjusted Rand Index and F-score, followed in performance by MV. Cluster identification scores for LCB and CB were particularly low (ARI scores always below 0.02), as these methods tended to form far too many clusters.

\begin{figure}[t]
	\centering
	\subfloat[Graphs: Varying Noise $w$ \label{fig:noise}]
	{\includegraphics[width=.475\linewidth]{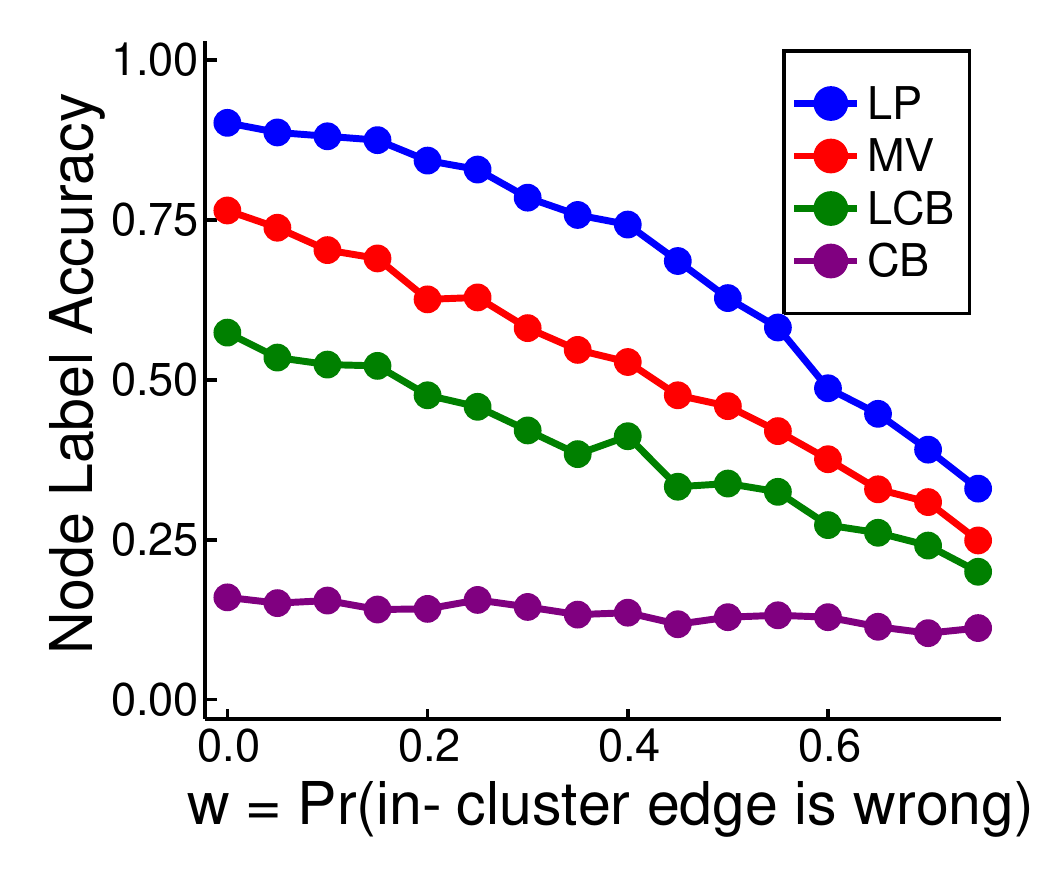}}\hfill
	\subfloat[Graphs: Varying \# Clusters \label{fig:clusters}]
	{\includegraphics[width=.475\linewidth]{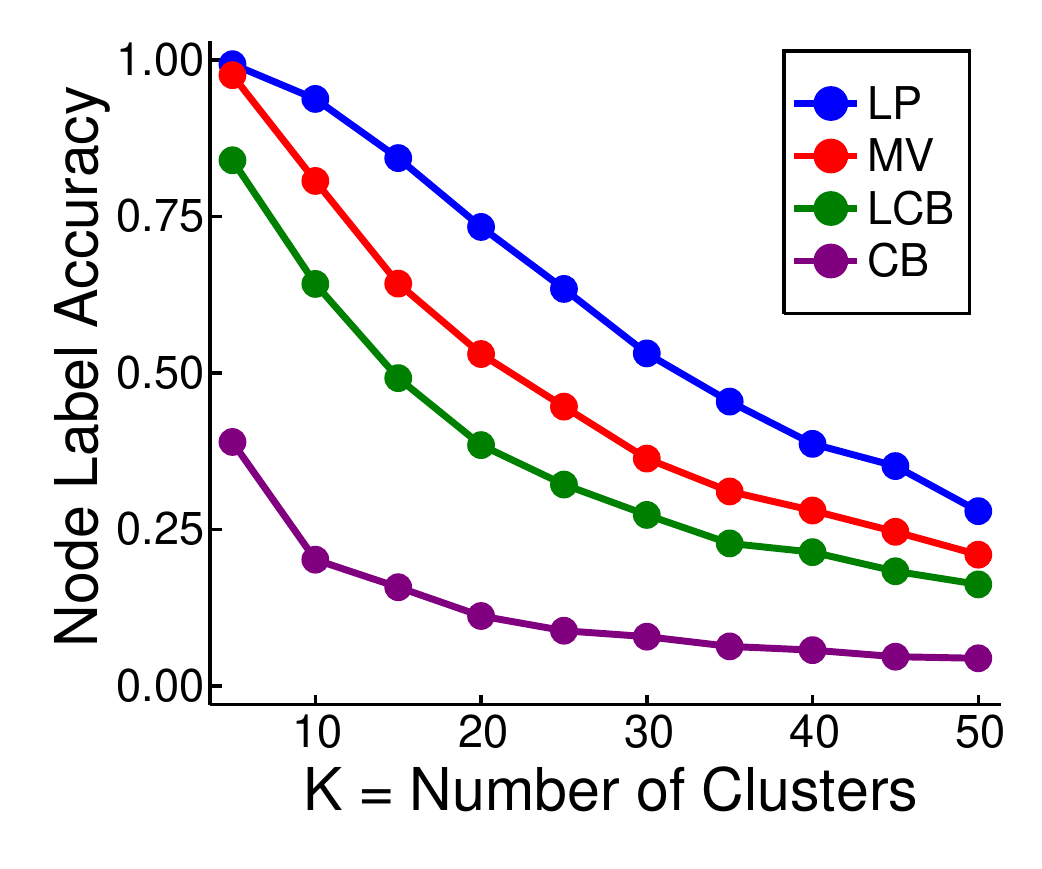}}\hfill
	\subfloat[Hypergraphs: Varying Noise $w$ \label{fig:noise-hyper}]
	{\includegraphics[width=.475\linewidth]{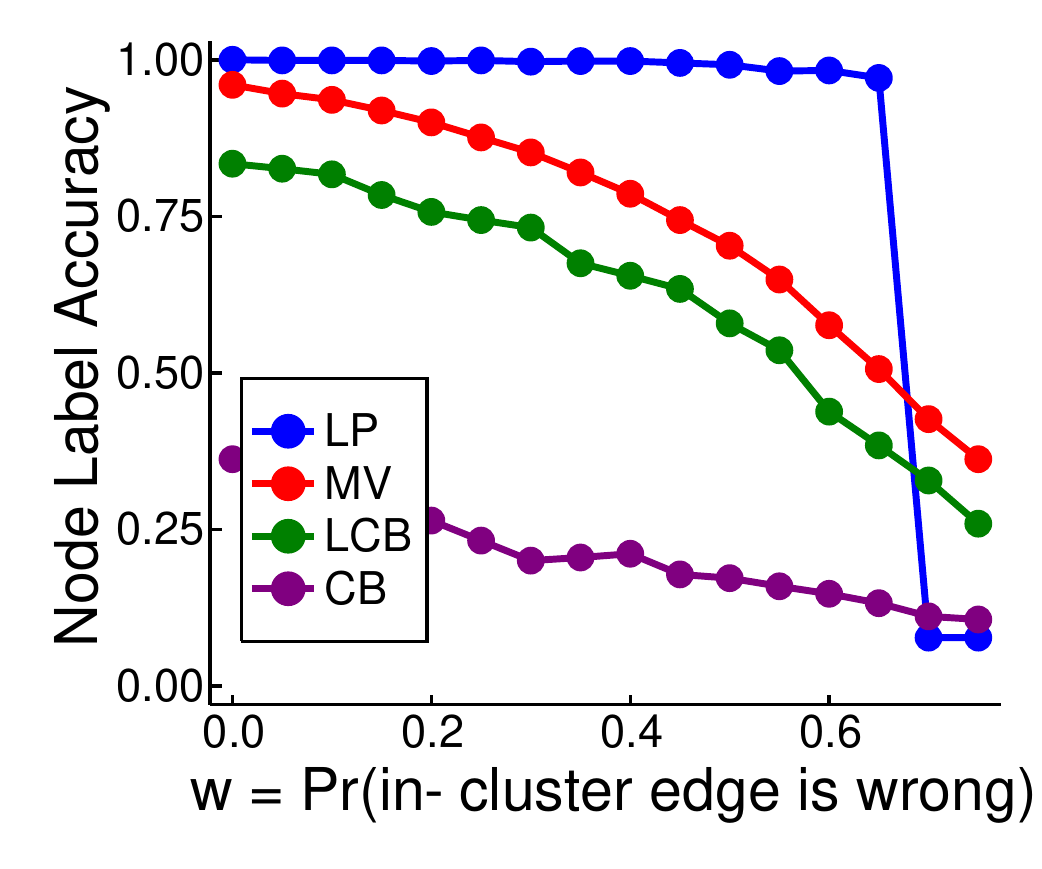}}\hfill
	\subfloat[Hypergraphs: Varying \# Clusters \label{fig:clusters-hyper}]
	{\includegraphics[width=.475\linewidth]{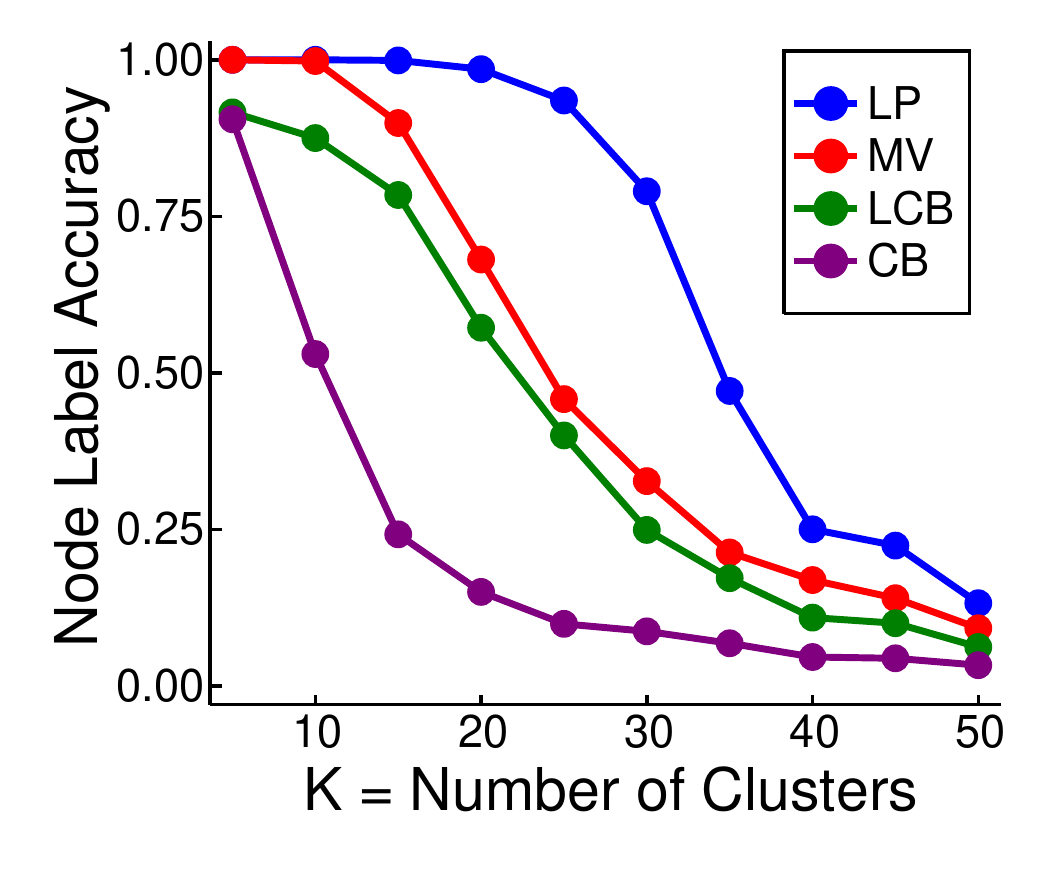}}
	\caption{(a)--(b): Performance of algorithms on a synthetic graph model for chromatic correlation clustering~\cite{Bonchi2015ccc}.
          Across a range of parameters, our \emph{LP} method outperforms competing methods in predicting the ground truth label of the nodes.
          (c)--(d): In experiments on synthetic 3-uniform hypergraphs, \emph{LP} performs well for most parameter regimes but there is some sensitivity to the very noisy setting.}
	\label{fig:synthetic_all}
\end{figure}

The CB and LCB algorithms, as well as the synthetic graph model itself, explicitly distinguish between ground truth node labels and ground truth clusters. Thus, our third experiment explores a parameter regime tailored more towards the strengths of CB and LCB. We fix $L = 20$, and vary the number of clusters from $K = 50$ to $K = 200$ in increments of 25. Following the experiments of Bonchi et al.~\cite{Bonchi2015ccc} we set $p = w = 0.5$, and set $q = 0.03$. Even in this setting, we find that our algorithms maintain an advantage. For all values of $K$, our LP algorithm outperforms other methods in terms of node label accuracy, and also obtains higher ARI scores when $K$ is a small multiple of $L$. We note that LCB and CB only obtain better cluster identification scores in parameter regimes where all algorithms obtain ARI scores below 0.1.

\begin{figure}[t]
	\centering
	\subfloat[Label Accuracy \label{acc}]
	{\includegraphics[width=.475\linewidth]{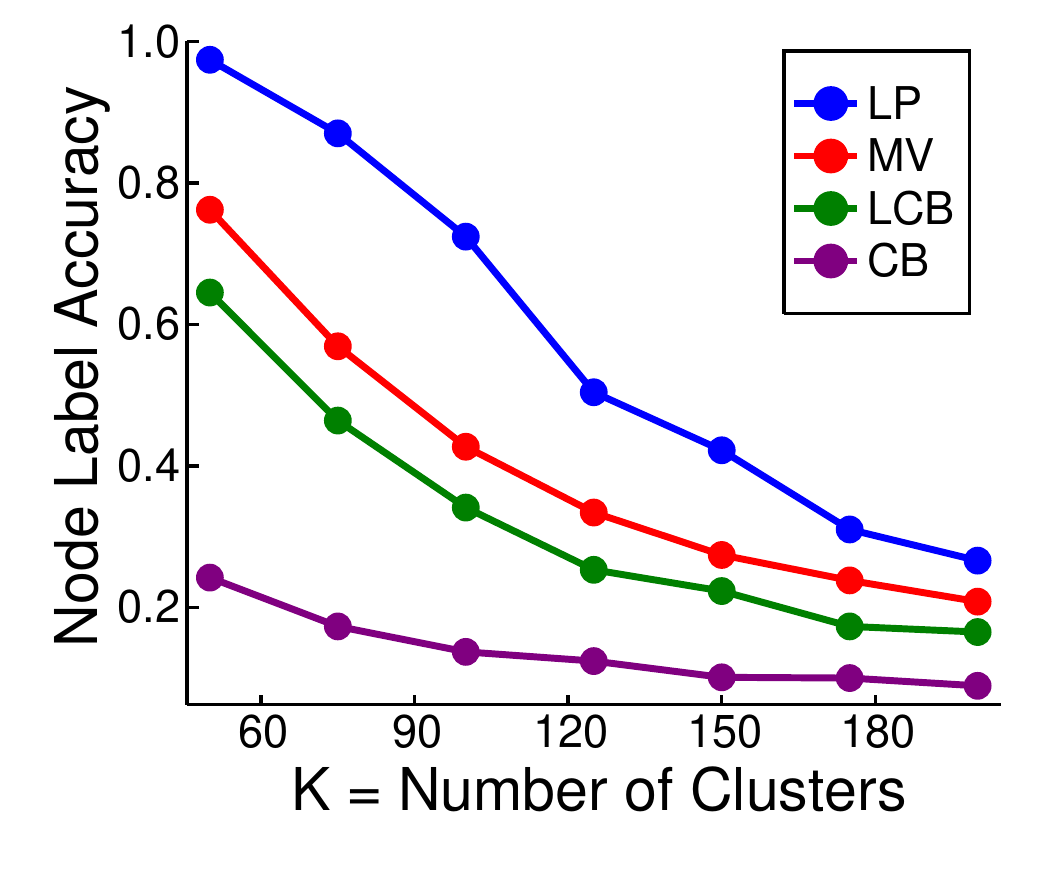}}\hfill
	\subfloat[ARI Scores \label{ari}]
	{\includegraphics[width=.475\linewidth]{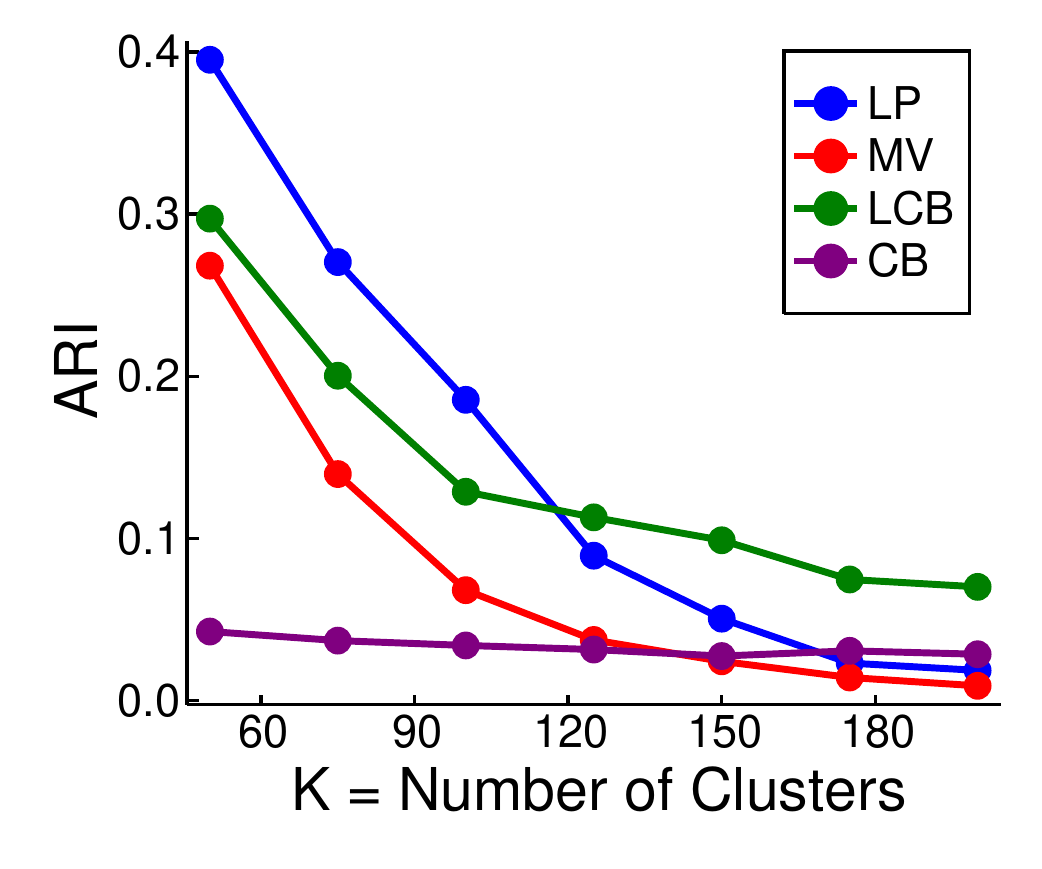}}
	\caption{LCB and CB are primarily designed for settings where $K$ is much larger than $L$. Despite this, our LP method always obtains better label assignment scores, and often obtains better ARI cluster identification scores, when we fix $L = 20$ and let $K$ vary from 50 to 500.}
	\label{fig:bonchiexps}
\end{figure}

\xhdr{Synthetic Hypergraph Results}
We ran similar experiments on synthetic 3-uniform hypergraphs. 
We again set $n = 1000$ and used $p = 0.005$ and $q = 0.0001$ for intra-cluster and inter-cluster hyperedge probabilities.
In one experiment, we fixed $L = 15$ and varied $w$, and in another we fixed $w = 0.2$ and varied the number of clusters $L$.
\Cref{fig:noise-hyper,fig:clusters-hyper} shows the accuracies.
Again, \emph{LP} tends to have the best performance. When $L = 15$, our method achieves nearly perfect accuracy for $w \le 0.6$.
However, we observe performance sensitivity when the noise is too large:
when $w$ increases from $0.6$ to $0.65$, the output of \emph{LP} no longer tracks the ground truth cluster assignment.
This occurs despite the fact that the LP solution is integral, and we are in fact optimally solving the \obj{} objective.
We conjecture this sharp change in accuracy is due to an information theoretic detectability threshold, which depends on parameters of the synthetic model.

\xhdr{Academic Department Labels in an Email Network} 
To test the algorithms on real-world data, we use the \emph{Email-Eu-core} network~\cite{yin2017local,leskovec2007graph}.
Nodes in the graph represent researchers at a European institution, 
edges indicate email correspondence (we consider the edges as undirected), 
and nodes are labeled by the departmental affiliation of each researcher. 
We wish to test how well each method can identify node labels, 
if we assume we have access to a (perhaps noisy and imperfect) mechanism for associating emails with labels for inter-department and intra-department communication. 
To model such a mechanism, we generate edge categories in a manner similar to the synthetic above.
An edge inside of a cluster (i.e., an email within the same department) is given the correct department label with probability $1-w$, and a random label with probability $w$. 
An edge between two members of different departments is given a uniform random label. 
\Cref{fig:email2} reports each algorithm's ability to detect department labels when $w$ varies from $0$ to $0.75$. Our \emph{LP}
method returns the best results in all cases, and is robust in detecting department labels even in the high-noise regime.
\begin{figure}[t]
	\centering
	\subfloat[Email-Eu-core \label{fig:email2}]
	{\includegraphics[width=.5\linewidth]{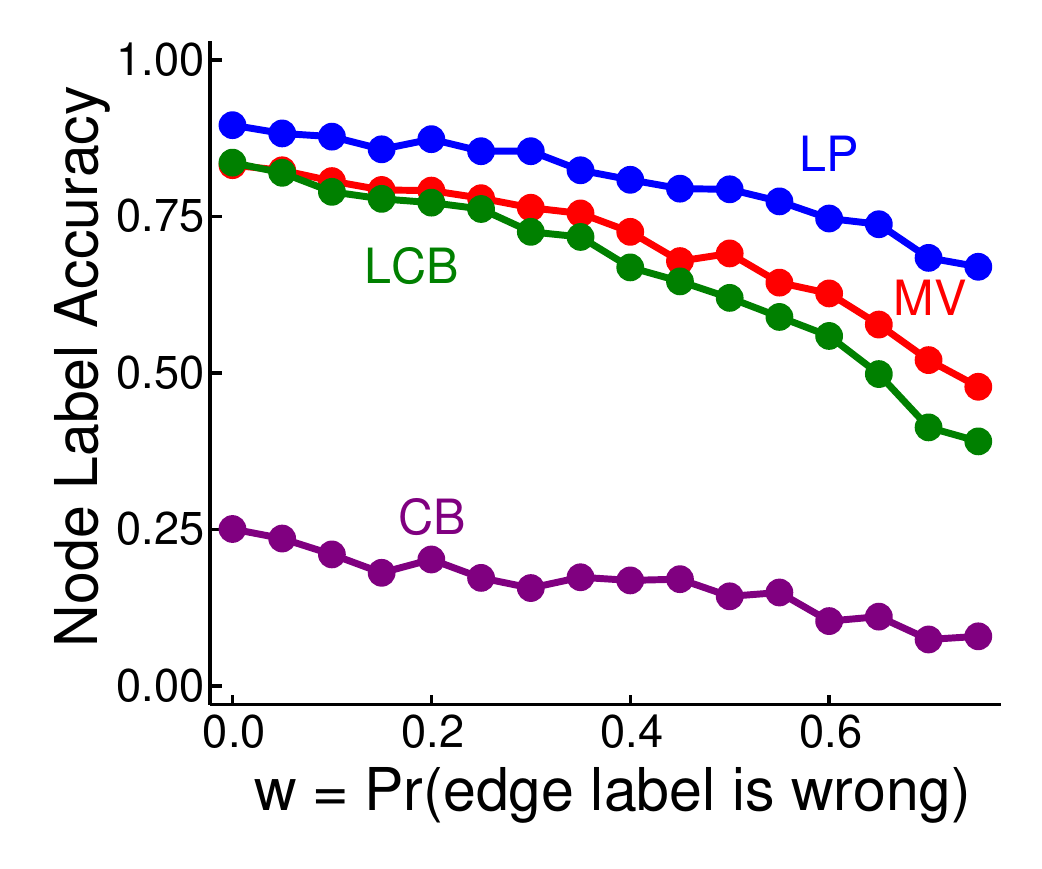}}\hfill
	\subfloat[Walmart-Products \label{fig:walmart}]
	{\includegraphics[width=.5\linewidth]{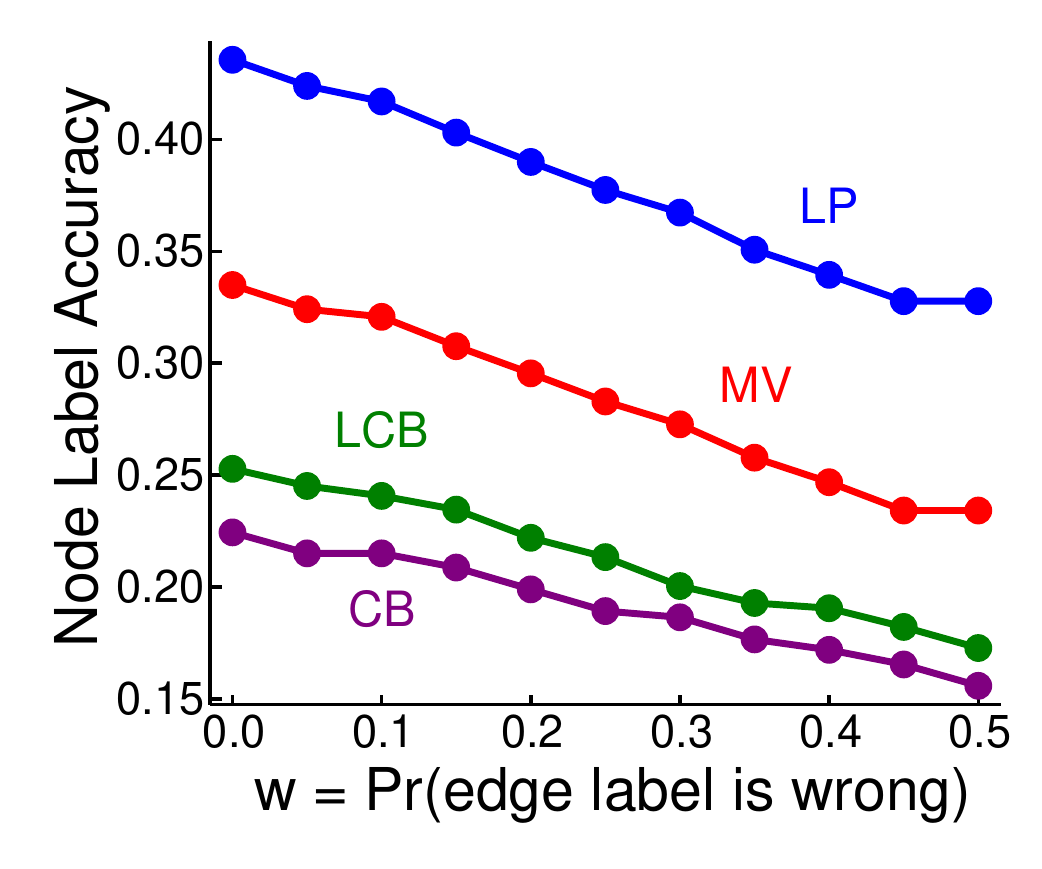}}\hfill
	\caption{
	Accuracy in clustering nodes in real-world datasets when edge labels are a
	noisy signal for ground truth node cluster membership.
	For both an email graph (a) and a product co-purchasing hypergraph (b), our \emph{LP-Round} method
	consistently outperforms other methods.
	}
	\label{fig:real-cd}
\end{figure}

\xhdr{Product Categories}
The \emph{Walmart-Trips} dataset from \cref{sec:exp1} also has product information.
We assigned products to one of ten broad departments in which they appear on \url{walmart.com}
(e.g., ``Clothing, Shoes, and Accessories'') to construct a \emph{Walmart-Products} hypergraph 
with ground truth node labels.
Recall that hyperedges are sets of co-purchased products. 
We generate noisy hyperedge labels as before, with $1-w$ as the probability that a hyperedge with nodes from a single department will have the correct label. 
Results are reported in \cref{fig:walmart}, and our
LP-round method can detect true departments at a much higher rate than the other methods.

\subsection{Temporal Community Detection}
In the next experiment, we show how our framework can be used to identify
communities of nodes in a temporal network, where we use timestamps on edges as
a type of categorical label that two nodes should be clustered together. For
data, we use the \emph{CollegeMsg} network~\cite{panzarasa2009patterns}, which
records private messages (time-stamped edges) between 1899 users (nodes) of a
social media platform at UC-Irvine.

Removing timestamps and applying a standard graph clustering algorithm would be
a standard approach to identify communities of users.
However, this loses the explicit relationship with time.
As an alternative, we convert timestamps into discrete edge labels
by ordering edges with respect to time and separating them into $k$ equal-sized
bins representing time windows. Optimizing \obj{} then corresponds to
clustering users into time windows, in order to maximize the number of private
messages that occur between users in the same time window. In this way, our
framework can identify \emph{temporal communities} in a social network, i.e.,
groups of users that are highly active in sending each other messages
\emph{within a short period of time}.
\begin{figure}[t]
	\centering
	\subfloat[Normalized Cut \label{fig:normcut}]
	{\includegraphics[width=.475\linewidth]{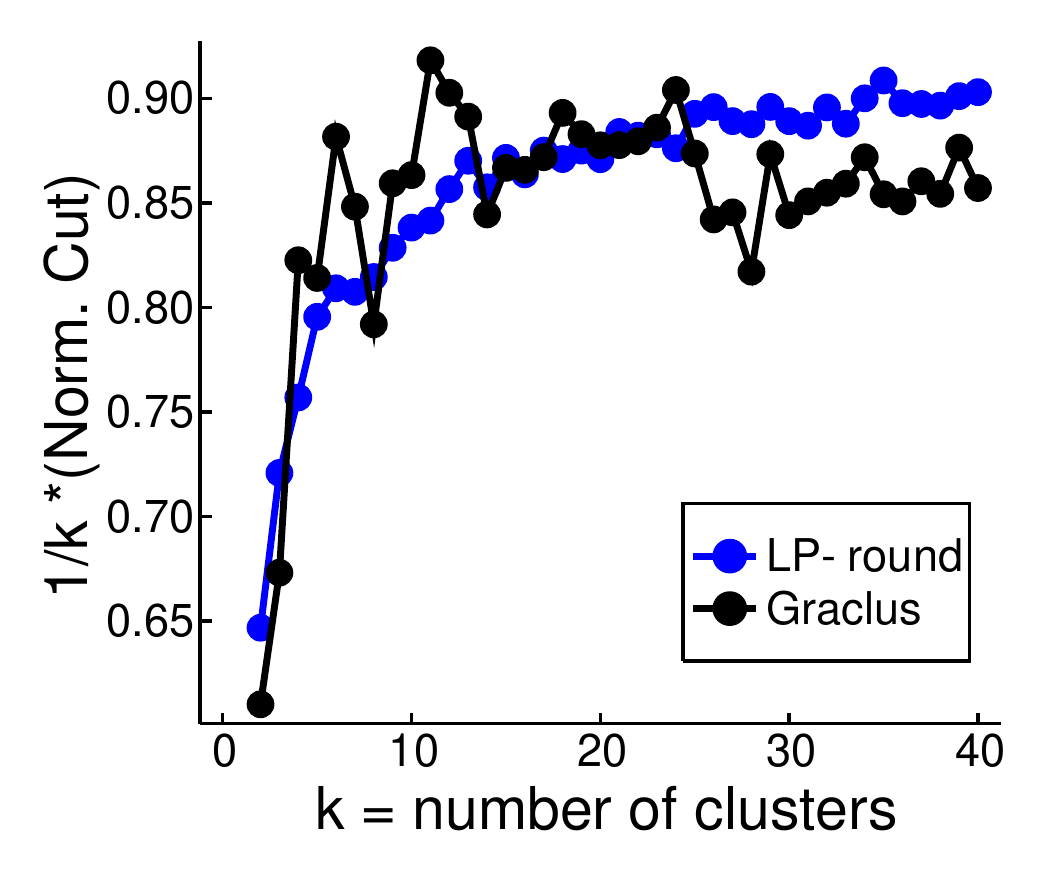}}\hfill
	\subfloat[Inner edge time difference \label{fig:avgtimediff}]
	{\includegraphics[width=.475\linewidth]{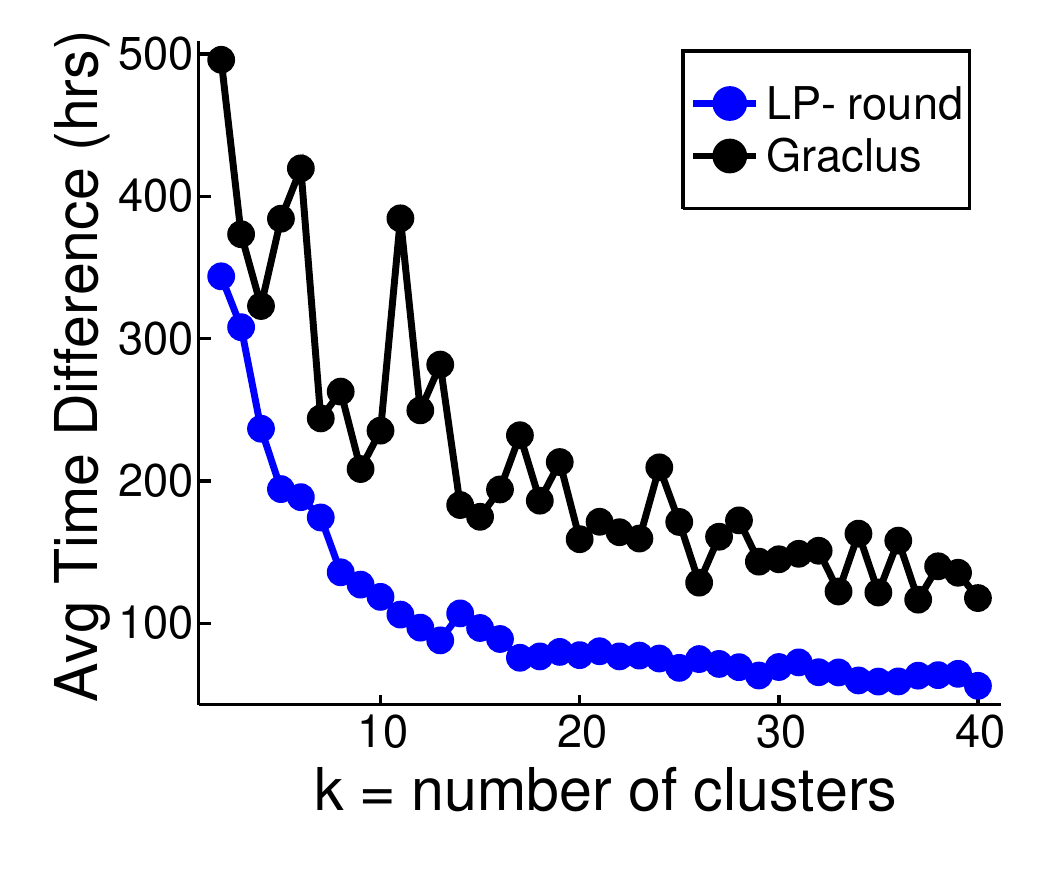}}
	\caption{Results for \emph{LP} and \emph{Graclus} in clustering a temporal network.
          Our \emph{LP} method is competitive for \emph{Graclus}'s objective (normalized cut; left),
          while preserving the temporal structure of network much better (right).}
	\label{fig:temporal}
\end{figure}

We construct edge-labeled graphs for different values of $k$, and compare
\emph{LP} against clusterings obtained by discarding time stamps and running \emph{Graclus}~\cite{Dhillon-2007-graclus},
a standard graph clustering algorithm.
\emph{Graclus} seeks to cluster the nodes into $k$ disjoint clusters
$S_1, \ldots, S_k$ to minimize the normalized cut objective:
\begin{equation*}
\label{normcut}
{\textstyle \textbf{Ncut}(S_1, S_2, \hdots, S_k) =\sum_{i = 1}^k \frac{\textbf{cut}(S_i)}{\textbf{vol}(S_i)}},
\end{equation*}
where $\textbf{cut}(S)$ is the number of edges leaving $S$, and
$\textbf{vol}(S)$ is the \emph{volume} of $S$, i.e., the number of edge end
points in $S$. \Cref{fig:normcut} shows that \emph{LP} is in fact
competitive with \emph{Graclus} in finding clusterings with small normalized cut
scores, even though \emph{LP} is designed for a different objective.
However, \emph{LP} still avoids cutting edges, and it finds clusterings that also have small normalized cut values.
The other goal of \emph{LP} is to place few edges in a cluster
with the wrong label, which in this scenario corresponds to clustering
messages together if they were sent close in time. We therefore also
measure the average difference between timestamps of interior edges and the
average time stamp in each clustering, i.e.,
\begin{equation*}
\label{timediff}
\textstyle {\textbf{AvgTimeDiff}(S_1, \hdots, S_k) = \frac{1}{\lvert E_{\textnormal{int}} \rvert}\sum_{i = 1}^k \sum_{e\in E_i} \lvert \text{timestamp}(e) - \mu_i \rvert, }
\end{equation*}
where $E_{\textnormal{int}}$ is the set of interior edges completely contained
in some cluster, $E_i$ is the set of interior edges of cluster $S_i$, and
$\mu_i$ is the average time stamp in $E_i$. Not surprisingly, this value tends
to be large for \emph{Graclus}, since this method ignores timestamps.
However, \Cref{fig:avgtimediff} shows that this value tends to be small for \emph{LP}, indicting that it
is indeed detecting clusters of users that are highly interactive within a
specific short period of time.

\subsection{Analysis of the Cooking Hypergraph}
\label{sec:exp3}
Finally, we apply our framework and LP-round algorithm to gain insights into the
\emph{Cooking} hypergraph dataset from \cref{sec:exp1}, demonstrating
our methodology for exploratory data analysis.
An edge in this hypergraph is a set of ingredients for a recipe, and each
recipe is categorized according to cuisine. \obj{}
thus corresponds to separating ingredients among cuisines, in a way that
maximizes the number of recipes whose ingredients are all in the same cluster
(see Ahn et al.~\cite{Ahn-2011-flavor} for related analyses).

\Cref{tab:allexp1} shows that only 20\% of the recipes can
be made (i.e., a 0.2 edge satisfaction) after partitioning ingredients among
cuisine types.  This is due to the large number of common ingredients such as
salt and olive oil that are shared across many cuisines (a problem in
other recipe network analyses~\cite{Teng-2012-recipes}). We negate
the negative effect of high-degree nodes as follows.  For an ingredient $i$, let
$d_i^c$ be the number of recipes of cuisine $c$ containing $i$. Let
$M_i = \max_{c} d_i^c$ measure \emph{majority degree} and $T_i = \sum_{t} d_i^c$ the
\emph{total degree}. Note that $B_i = T_i - M_i$ is a lower bound on the number
of hyperedge mistakes we will make at edges incident to node $i$. We can refine
the original dataset by removing all nodes with $B_i$ greater than some $\beta$.

\xhdr{Making recipes or wasting ingredients}
\begin{figure}[t]
	\centering
	\subfloat[Edge Satisfaction \label{fig:esat}]
	{\includegraphics[width=.45\linewidth]{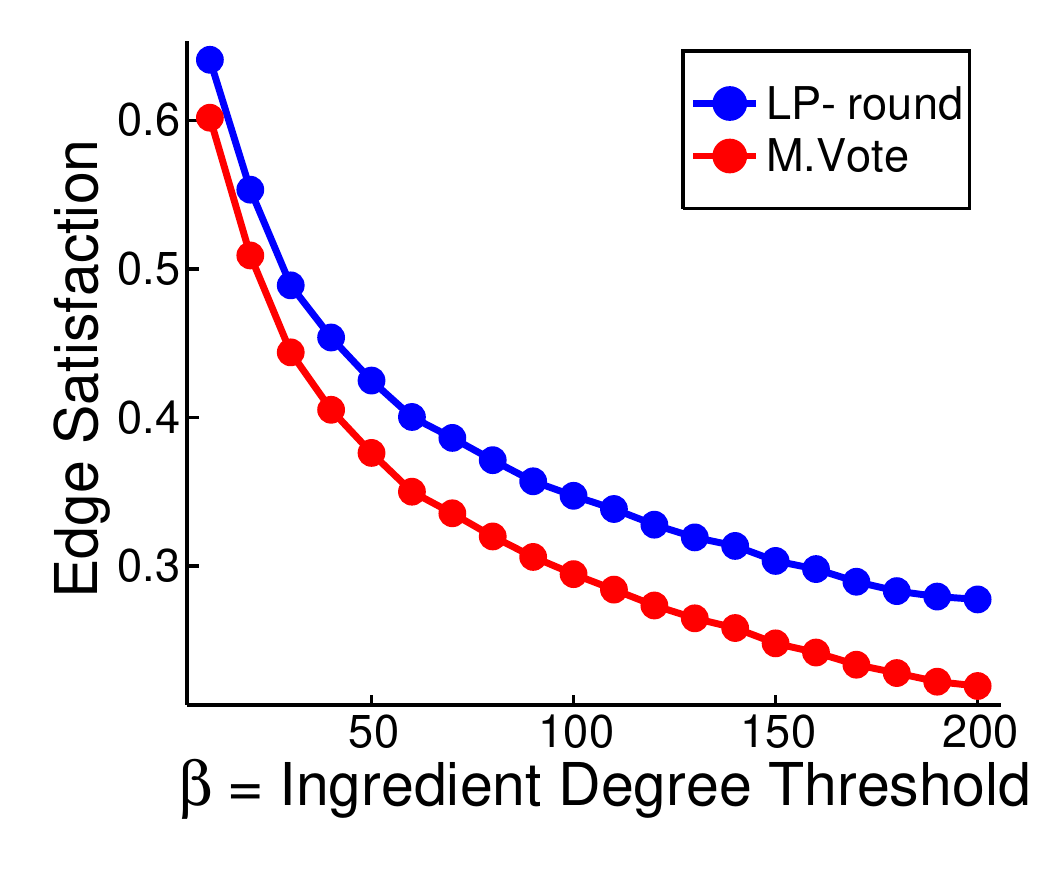}}\hfill
	\subfloat[Unused ingredients \label{fig:unused}]
	{\includegraphics[width=.475\linewidth]{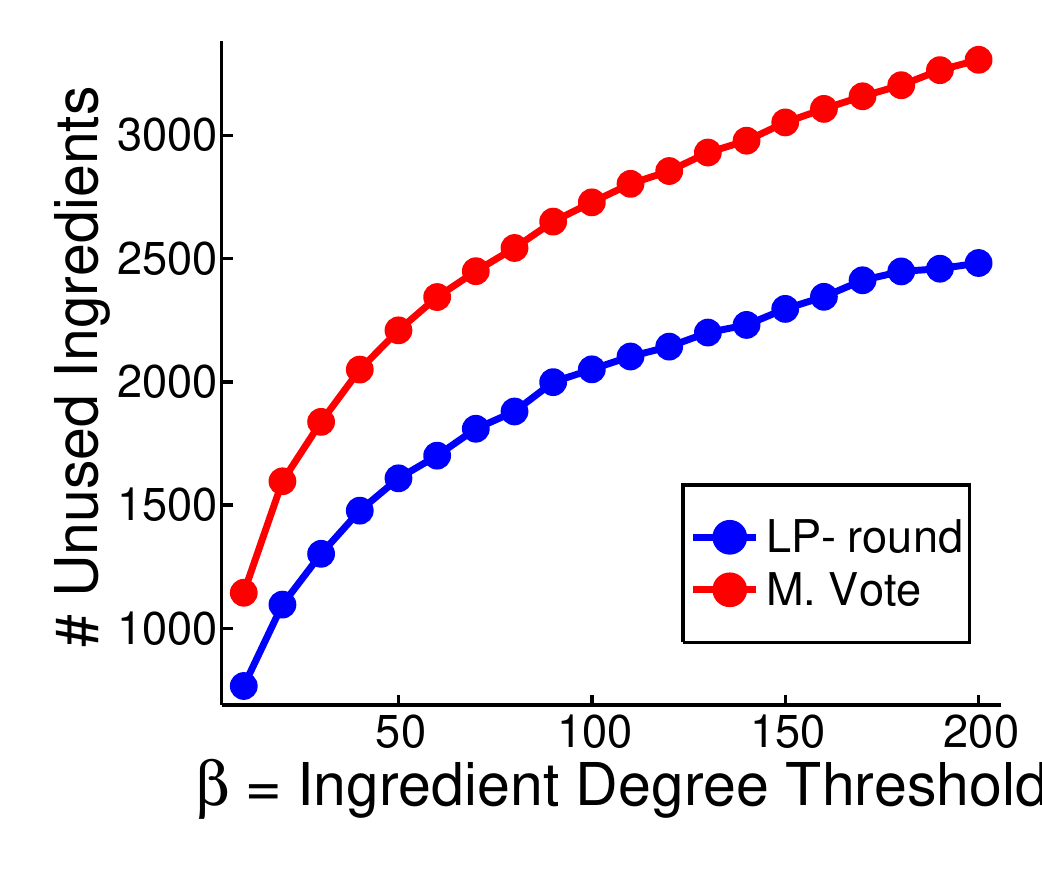}}\hfill
	\caption{As $\beta$ increases, we discard fewer high-degree ingredients before clustering the rest. Our method
	 always ``makes'' more recipes (higher edge satisfaction) and ``wastes'' fewer ingredients (smaller number of unused ingredients).}
	\label{fig:cooking}
\end{figure}
\Cref{fig:esat} shows edge satisfaction scores for \emph{LP} and \emph{MV} when
we cluster for different $\beta$. When $\beta = 10$, edge satisfaction is above
0.64 with \emph{LP}. As $\beta$ increases, edge satisfaction decreases, but
\emph{LP} outperforms \emph{MV} in all cases. We also consider a measure of
``ingredient waste'' for each method. An ingredient is \emph{unused} if we
cannot make any recipes by combining the ingredient with other ingredients in
its cluster. A low number of unused ingredients indicates that a method forms
clusters where ingredients combine together well. \Cref{fig:unused}
shows the number of unused ingredients as $\beta$ varies. Again, \emph{LP}
outperforms \emph{MV}.

\xhdr{Specific ingredient and recipe clusters} 
\begin{table}[t]
	\caption{Examples of ingredients and recipes from special clusters identified by \emph{LP}, but not \emph{Majority Vote}.}
	\label{tab:clusters}
	\centering
	\begin{tabular}{p{\linewidth}}
		\toprule
		\textbf{French Fruit-Based Desserts} ($\beta = 70$)\\ 
		\textbf{Ingredients:} ruby red grapefruit, strawberry ice cream, dry hard cider, icing, prunes, tangerine juice, sour cherries.\\
		 \textbf{Recipes:} 1. \{almond extract, bittersweet chocolate, sugar, sour cherries, brioche, heavy cream, unsalted butter, kirsch\}, 2. \{large egg yolks, ruby red grapefruit, dessert wine, sugar\}\\
		\midrule 
		\textbf{Brazilian Caipirinha Recipes} ($\beta = 170$) \\
		\textbf{Ingredients:} simple syrup, light rum, ice, superfine sugar, key lime, coco, kumquats, liquor, mango nectar, vanilla essence\\
		 \textbf{Recipes:} \{cachaca, ice\} $+$ 
		1. \{ lime juice, kumquats, sugar\},
		2. \{lime, fruit puree, simple syrup\},
		3. \{ superfine sugar, lime juice, passion fruit juice\},
		4. \{ sugar, liquor, mango nectar, lime, mango\}\\
		\bottomrule
	\end{tabular}
\end{table} 
We finally highlight specific ingredient clusters that \emph{LP} identifies but \emph{MV} does not. When $\beta = 170$, \emph{LP} places 10 ingredients with the Brazilian cuisine which \emph{MV} does not, leading to 23 extra recipes that are unique to \emph{LP}. Of these, 21 correspond to variants of the Caipirinha, a popular Brazilian cocktail. When $\beta = 70$, 24 ingredients and 24 recipes are unique to the French cuisine cluster of \emph{LP}. Of these, 18 correspond to desserts, and 14 have a significant fruit component. 
\Cref{tab:clusters} has examples of ingredients and recipes from both these clusters.

\section{Discussion}
We have developed a computational framework for clustering nodes of 
hypergraphs when edges have categorical labels that signal node similarity.
With two categories, our clustering objective can be solved in polynomial time.
For general problems, our linear programming relaxations provide
2-approximation or even better guarantees, which are far tighter than what is
seen in the related literature on correlation clustering. This method is also
extremely effective in practice. Amazingly, our LP-round algorithm often actually minimizes
our NP-hard objective (certified through integral solutions) on hypergraphs with
tens of thousands of edges in just tens of seconds. The approach also works well
in problems when performance is measured in terms of some sort of ground truth labeling, outperforming baselines by a
substantial margin.

For the special cases of two-category graphs and rank-$3$ hypergraphs, the
\obj{} objective is a ``regular energy function'' within the energy minimization
framework of computer vision~\cite{kolmogorov2004what}. This provides
alternative polynomial time algorithms in these cases (see
\cref{sec:energy}). However, these approaches do not work for two important
regimes: more than two categories, or in general hypergraphs (in the latter, the
penalties are no longer a semi-metric, which is needed for approximation
algorithms~\cite{boykov2001fast}).

\section*{Acknowledgments}
This research was supported by
NSF Award DMS-1830274,
ARO Award W911NF19-1-0057, and
ARO MURI.

\bibliographystyle{plain}
\bibliography{main}

\begin{thebibliography}{10}

\bibitem{acar2009link}
Evrim Acar, Daniel~M. Dunlavy, and Tamara~G. Kolda.
\newblock Link prediction on evolving data using matrix and tensor
  factorizations.
\newblock In {\em 2009 IEEE International Conference on Data Mining Workshops},
  pages 262--269. IEEE, 2009.

\bibitem{Agarwal2006}
Sameer Agarwal, Kristin Branson, and Serge Belongie.
\newblock Higher order learning with graphs.
\newblock In {\em Proceedings of the 23rd International Conference on Machine
  Learning - {ICML} {\textquotesingle}06}. {ACM} Press, 2006.

\bibitem{Ahn-2011-flavor}
Yong-Yeol Ahn, Sebastian~E. Ahnert, James~P. Bagrow, and
  Albert-L{\'{a}}szl{\'{o}} Barab{\'{a}}si.
\newblock Flavor network and the principles of food pairing.
\newblock {\em Scientific Reports}, 1(1), December 2011.

\bibitem{Akoglu-2012-pics}
Leman Akoglu, Hanghang Tong, Brendan Meeder, and Christos Faloutsos.
\newblock {PICS}: Parameter-free identification of cohesive subgroups in large
  attributed graphs.
\newblock In {\em Proceedings of the 2012 {SIAM} International Conference on
  Data Mining}. Society for Industrial and Applied Mathematics, 2012.

\bibitem{albert2002statistical}
R{\'e}ka Albert and Albert-L{\'a}szl{\'o} Barab{\'a}si.
\newblock Statistical mechanics of complex networks.
\newblock {\em Reviews of Modern physics}, 74(1):47, 2002.

\bibitem{Anava:2015:ITP:2736277.2741629}
Yael Anava, Noa Avigdor-Elgrabli, and Iftah Gamzu.
\newblock Improved theoretical and practical guarantees for chromatic
  correlation clustering.
\newblock In {\em Proceedings of the 24th International Conference on World
  Wide Web}, WWW '15, pages 55--65, Republic and Canton of Geneva, Switzerland,
  2015. International World Wide Web Conferences Steering Committee.

\bibitem{arrigo2019framework}
Francesca Arrigo, Desmond~J Higham, and Francesco Tudisco.
\newblock A framework for second order eigenvector centralities and clustering
  coefficients.
\newblock {\em arXiv:1910.12711}, 2019.

\bibitem{BansalBlumChawla2004}
Nikhil Bansal, Avrim Blum, and Shuchi Chawla.
\newblock Correlation clustering.
\newblock {\em Machine Learning}, 56:89--113, 2004.

\bibitem{benson2019three}
Austin~R Benson.
\newblock Three hypergraph eigenvector centralities.
\newblock {\em SIAM Journal on Mathematics of Data Science}, 1(2):293--312,
  2019.

\bibitem{benson2018simplicial}
Austin~R. Benson, Rediet Abebe, Michael~T. Schaub, Ali Jadbabaie, and Jon
  Kleinberg.
\newblock Simplicial closure and higher-order link prediction.
\newblock {\em Proceedings of the National Academy of Sciences},
  115(48):E11221--E11230, 2018.

\bibitem{benson2016higher}
Austin~R. Benson, David~F. Gleich, and Jure Leskovec.
\newblock Higher-order organization of complex networks.
\newblock {\em Science}, 353(6295):163--166, 2016.

\bibitem{Bonchi2015ccc}
Francesco Bonchi, Aristides Gionis, Francesco Gullo, Charalampos~E.
  Tsourakakis, and Antti Ukkonen.
\newblock Chromatic correlation clustering.
\newblock {\em ACM Trans. Knowl. Discov. Data}, 9(4):34:1--34:24, June 2015.

\bibitem{Bonchi2012ccc}
Francesco Bonchi, Aristides Gionis, Francesco Gullo, and Antti Ukkonen.
\newblock Chromatic correlation clustering.
\newblock In {\em Proceedings of the 18th ACM SIGKDD International Conference
  on Knowledge Discovery and Data Mining}, KDD '12, pages 1321--1329, New York,
  NY, USA, 2012. ACM.

\bibitem{boriah2008similarity}
Shyam Boriah, Varun Chandola, and Vipin Kumar.
\newblock Similarity measures for categorical data: A comparative evaluation.
\newblock In {\em Proceedings of the 2008 SIAM International Conference on Data
  Mining}, pages 243--254. SIAM, 2008.

\bibitem{Bothorel-2015-attributed}
Cecile Bothorel, Juan~David Cruz, Mateo Magnani, and Barbora Micenkov{\'{a}}.
\newblock Clustering attributed graphs: Models, measures and methods.
\newblock {\em Network Science}, 3(3):408--444, March 2015.

\bibitem{boykov2001fast}
Y.~{Boykov}, O.~{Veksler}, and R.~{Zabih}.
\newblock Fast approximate energy minimization via graph cuts.
\newblock {\em IEEE Transactions on Pattern Analysis and Machine Intelligence},
  23(11):1222--1239, Nov 2001.

\bibitem{brandes2007modularity}
Ulrik Brandes, Daniel Delling, Marco Gaertler, Robert Gorke, Martin Hoefer,
  Zoran Nikoloski, and Dorothea Wagner.
\newblock On modularity clustering.
\newblock {\em IEEE Transactions on Knowledge and Data Engineering},
  20(2):172--188, 2007.

\bibitem{calinescu2000}
Gruia Calinescu, Howard Karloff, and Yuval Rabani.
\newblock An improved approximation algorithm for multiway cut.
\newblock {\em Journal of Computer and System Sciences}, 60(3):564 -- 574,
  2000.

\bibitem{cohen2019solving}
Michael~B Cohen, Yin~Tat Lee, and Zhao Song.
\newblock Solving linear programs in the current matrix multiplication time.
\newblock In {\em Proceedings of the 51st Annual ACM SIGACT Symposium on Theory
  of Computing}, pages 938--942. ACM, 2019.

\bibitem{crossley2013cognitive}
Nicolas~A. Crossley, Andrea Mechelli, Petra~E. V{\'e}rtes, Toby~T.
  Winton-Brown, Ameera~X. Patel, Cedric~E. Ginestet, Philip McGuire, and
  Edward~T. Bullmore.
\newblock Cognitive relevance of the community structure of the human brain
  functional coactivation network.
\newblock {\em Proceedings of the National Academy of Sciences},
  110(28):11583--11588, 2013.

\bibitem{Dahlhaus94thecomplexity}
E.~Dahlhaus, D.~S. Johnson, C.~H. Papadimitriou, P.~D. Seymour, and
  M.~Yannakakis.
\newblock The complexity of multiterminal cuts.
\newblock {\em SIAM Journal on Computing}, 23:864--894, 1994.

\bibitem{Dhillon-2007-graclus}
Inderjit~S. Dhillon, Yuqiang Guan, and Brian Kulis.
\newblock Weighted graph cuts without eigenvectors a multilevel approach.
\newblock {\em {IEEE} Transactions on Pattern Analysis and Machine
  Intelligence}, 29(11):1944--1957, 2007.

\bibitem{DeDomenico-2015-reducibility}
Manlio~De Domenico, Vincenzo Nicosia, Alexandre Arenas, and Vito Latora.
\newblock Structural reducibility of multilayer networks.
\newblock {\em Nature Communications}, 6(1), 2015.

\bibitem{Dong2017metapath}
Yuxiao Dong, Nitesh~V. Chawla, and Ananthram Swami.
\newblock metapath2vec: Scalable representation learning for heterogeneous
  networks.
\newblock In {\em Proceedings of the 23rd {ACM} {SIGKDD} International
  Conference on Knowledge Discovery and Data Mining}. {ACM} Press, 2017.

\bibitem{easley2012networks}
David Easley, Jon Kleinberg, et~al.
\newblock {\em Networks, Crowds, and Markets}.
\newblock Cambridge Books, 2012.

\bibitem{fortunato2010community}
Santo Fortunato.
\newblock Community detection in graphs.
\newblock {\em Physics Reports}, 486(3-5):75--174, 2010.

\bibitem{freedman2005energy}
Daniel Freedman and Petros Drineas.
\newblock Energy minimization via graph cuts: Settling what is possible.
\newblock In {\em 2005 IEEE Computer Society Conference on Computer Vision and
  Pattern Recognition (CVPR'05)}, volume~2, pages 939--946. IEEE, 2005.

\bibitem{fukunga2018highcc}
Takuro Fukunaga.
\newblock {LP}-based pivoting algorithm for higher-order correlation
  clustering.
\newblock In Lusheng Wang and Daming Zhu, editors, {\em Computing and
  Combinatorics}, pages 51--62, Cham, 2018. Springer International Publishing.

\bibitem{ganti1999cactus}
Venkatesh Ganti, Johannes Gehrke, and Raghu Ramakrishnan.
\newblock Cactus-clustering categorical data using summaries.
\newblock In {\em KDD}, volume~99, pages 73--83, 1999.

\bibitem{garg2004multiway}
Naveen Garg, Vijay~V Vazirani, and Mihalis Yannakakis.
\newblock Multiway cuts in node weighted graphs.
\newblock {\em Journal of Algorithms}, 50(1):49--61, 2004.

\bibitem{gibson2000clustering}
David Gibson, Jon Kleinberg, and Prabhakar Raghavan.
\newblock Clustering categorical data: An approach based on dynamical systems.
\newblock {\em The VLDB Journal—The International Journal on Very Large Data
  Bases}, 8(3-4):222--236, 2000.

\bibitem{gleich2018ccgen}
David~F. Gleich, Nate Veldt, and Anthony Wirth.
\newblock {Correlation Clustering Generalized}.
\newblock In {\em 29th International Symposium on Algorithms and Computation},
  volume 123 of {\em ISAAC 2018}, pages 44:1--44:13, Dagstuhl, Germany, 2018.
  Schloss Dagstuhl--Leibniz-Zentrum fuer Informatik.

\bibitem{walmart-data}
Kaggle.
\newblock Walmart recruiting: Trip type classification, 2015.
\newblock
  \url{https://www.kaggle.com/c/walmart-recruiting-trip-type-classification}.

\bibitem{cooking-data}
Kaggle.
\newblock What's cooking?, 2015.
\newblock \url{https://www.kaggle.com/c/whats-cooking}.

\bibitem{kipf2017semi}
Thomas~N Kipf and Max Welling.
\newblock Semi-supervised classification with graph convolutional networks.
\newblock In {\em Proceedings of the International Conference on Learning
  Representations}, 2017.

\bibitem{kivela2014multilayer}
Mikko Kivel{\"a}, Alex Arenas, Marc Barthelemy, James~P. Gleeson, Yamir Moreno,
  and Mason~A. Porter.
\newblock Multilayer networks.
\newblock {\em Journal of Complex Networks}, 2(3):203--271, 2014.

\bibitem{Kohli2007}
Pushmeet Kohli.
\newblock {\em Minimizing Dynamic and Higher Order Energy Functions using Graph
  Cuts}.
\newblock PhD thesis, Oxford Brookes University, 2007.

\bibitem{kolmogorov2004what}
V.~{Kolmogorov} and R.~{Zabih}.
\newblock What energy functions can be minimized via graph cuts?
\newblock {\em IEEE Transactions on Pattern Analysis and Machine Intelligence},
  26(2):147--159, Feb 2004.

\bibitem{Lancichinetti-2012-consensus}
Andrea Lancichinetti and Santo Fortunato.
\newblock Consensus clustering in complex networks.
\newblock {\em Scientific Reports}, 2(1), March 2012.

\bibitem{lattanzi2009affiliation}
Silvio Lattanzi and D~Sivakumar.
\newblock Affiliation networks.
\newblock In {\em Proceedings of the 41st Annual ACM Symposium on Theory of
  Computing}, pages 427--434. Citeseer, 2009.

\bibitem{lee2015efficient}
Yin~Tat Lee and Aaron Sidford.
\newblock Efficient inverse maintenance and faster algorithms for linear
  programming.
\newblock In {\em 2015 IEEE 56th Annual Symposium on Foundations of Computer
  Science}, pages 230--249. IEEE, 2015.

\bibitem{leskovec2007graph}
Jure Leskovec, Jon Kleinberg, and Christos Faloutsos.
\newblock Graph evolution: Densification and shrinking diameters.
\newblock {\em ACM Transactions on Knowledge Discovery from Data (TKDD)},
  1(1):2, 2007.

\bibitem{Leskovec2010}
Jure Leskovec, Kevin~J. Lang, and Michael Mahoney.
\newblock Empirical comparison of algorithms for network community detection.
\newblock In {\em Proceedings of the 19th International Conference on World
  Wide Web - {WWW} {\textquotesingle}10}. {ACM} Press, 2010.

\bibitem{Li2017motifcc}
P.~Li, H.~Dau, G.~Puleo, and O.~Milenkovic.
\newblock Motif clustering and overlapping clustering for social network
  analysis.
\newblock In {\em IEEE INFOCOM 2017 - IEEE Conference on Computer
  Communications}, pages 1--9, May 2017.

\bibitem{panli2017inhomogeneous}
Pan Li and Olgica Milenkovic.
\newblock Inhomogeneous hypergraph clustering with applications.
\newblock In I.~Guyon, U.~V. Luxburg, S.~Bengio, H.~Wallach, R.~Fergus,
  S.~Vishwanathan, and R.~Garnett, editors, {\em Advances in Neural Information
  Processing Systems 30}, pages 2308--2318. Curran Associates, Inc., 2017.

\bibitem{li2018motifCC}
Pan Li, Gregory~J. Puleo, and Olgica Milenkovic.
\newblock Motif and hypergraph correlation clustering.
\newblock {\em CoRR}, abs/1811.02089, 2018.

\bibitem{Moore-2017-community}
Cristopher Moore.
\newblock The computer science and physics of community detection: Landscapes,
  phase transitions, and hardness.
\newblock {\em Bulletin of the {EATCS}}, 121, 2017.

\bibitem{mucha2010community}
Peter~J. Mucha, Thomas Richardson, Kevin Macon, Mason~A. Porter, and
  Jukka-Pekka Onnela.
\newblock Community structure in time-dependent, multiscale, and multiplex
  networks.
\newblock {\em Science}, 328(5980):876--878, 2010.

\bibitem{newman2003structure}
Mark~EJ Newman.
\newblock The structure and function of complex networks.
\newblock {\em SIAM Review}, 45(2):167--256, 2003.

\bibitem{orlin2013max}
James~B Orlin.
\newblock Max flows in $o(nm)$ time, or better.
\newblock In {\em Proceedings of the forty-fifth annual ACM symposium on Theory
  of computing}, pages 765--774. ACM, 2013.

\bibitem{osting2017spectral}
Braxton Osting, Sourabh Palande, and Bei Wang.
\newblock Spectral sparsification of simplicial complexes for clustering and
  label propagation.
\newblock {\em arXiv:1708.08436}, 2017.

\bibitem{panzarasa2009patterns}
Pietro Panzarasa, Tore Opsahl, and Kathleen~M Carley.
\newblock Patterns and dynamics of users' behavior and interaction: Network
  analysis of an online community.
\newblock {\em Journal of the American Society for Information Science and
  Technology}, 60(5):911--932, 2009.

\bibitem{papalexakis2014spotting}
Evangelos Papalexakis, Konstantinos Pelechrinis, and Christos Faloutsos.
\newblock Spotting misbehaviors in location-based social networks using
  tensors.
\newblock In {\em Proceedings of the 23rd International Conference on World
  Wide Web}, pages 551--552. ACM, 2014.

\bibitem{porter2019nonlinearity}
Mason~A Porter.
\newblock Nonlinearity+ networks: A 2020 vision.
\newblock {\em arXiv:1911.03805}, 2019.

\bibitem{rossi2018higher}
Ryan~A Rossi, Nesreen~K Ahmed, and Eunyee Koh.
\newblock Higher-order network representation learning.
\newblock In {\em Companion Proceedings of the The Web Conference 2018}, pages
  3--4, 2018.

\bibitem{salnikov2018simplicial}
Vsevolod Salnikov, Daniele Cassese, and Renaud Lambiotte.
\newblock Simplicial complexes and complex systems.
\newblock {\em European Journal of Physics}, 40(1):014001, 2018.

\bibitem{schaeffer2007graph}
Satu~Elisa Schaeffer.
\newblock Graph clustering.
\newblock {\em Computer Science Review}, 1(1):27--64, 2007.

\bibitem{MAG-data}
Arnab Sinha, Zhihong Shen, Yang Song, Hao Ma, Darrin Eide, Bo-June~(Paul) Hsu,
  and Kuansan Wang.
\newblock An overview of microsoft academic service ({MAS}) and applications.
\newblock In {\em WWW {\textquotesingle}15 Companion}. {ACM}, 2015.

\bibitem{DAWN-data}
{Substance Abuse and Mental Health Services Administration. Drug Abuse Warning
  Network (DAWN)}, 2011.
\newblock
  \url{https://www.samhsa.gov/data/data-we-collect/dawn-drug-abuse-warning-network}.

\bibitem{sun2011pathsim}
Yizhou Sun, Jiawei Han, Xifeng Yan, Philip~S. Yu, and Tianyi Wu.
\newblock Pathsim: Meta path-based top-k similarity search in heterogeneous
  information networks.
\newblock {\em Proceedings of the VLDB Endowment}, 4(11):992--1003, 2011.

\bibitem{Teng-2012-recipes}
Chun-Yuen Teng, Yu-Ru Lin, and Lada~A. Adamic.
\newblock Recipe recommendation using ingredient networks.
\newblock In {\em Proceedings of the 3rd Annual {ACM} Web Science Conference on
  - {WebSci} {\textquotesingle}12}. {ACM} Press, 2012.

\bibitem{wagner1993between}
Dorothea Wagner and Frank Wagner.
\newblock Between min cut and graph bisection.
\newblock In {\em International Symposium on Mathematical Foundations of
  Computer Science}, pages 744--750. Springer, 1993.

\bibitem{Xu-2012-model}
Zhiqiang Xu, Yiping Ke, Yi~Wang, Hong Cheng, and James Cheng.
\newblock A model-based approach to attributed graph clustering.
\newblock In {\em Proceedings of the 2012 International Conference on
  Management of Data}. {ACM} Press, 2012.

\bibitem{yin2017local}
Hao Yin, Austin~R. Benson, Jure Leskovec, and David~F. Gleich.
\newblock Local higher-order graph clustering.
\newblock In {\em Proceedings of the 23rd ACM SIGKDD International Conference
  on Knowledge Discovery and Data Mining}, pages 555--564. ACM, 2017.

\bibitem{zaslavsky}
Thomas Zaslavsky.
\newblock Signed graphs.
\newblock {\em Discrete Applied Mathematics}, 4(1):47 -- 74, 1982.

\bibitem{Zhang-2018-discussion}
Justine Zhang, Cristian Danescu-Niculescu-Mizil, Christina Sauper, and Sean~J.
  Taylor.
\newblock Characterizing online public discussions through patterns of
  participant interactions.
\newblock {\em Proceedings of the {ACM} on Human-Computer Interaction},
  2({CSCW}):1--27, 2018.

\bibitem{zhou2007learning}
Dengyong Zhou, Jiayuan Huang, and Bernhard Sch{\"o}lkopf.
\newblock Learning with hypergraphs: Clustering, classification, and embedding.
\newblock In {\em Advances in Neural Information Processing Systems}, pages
  1601--1608, 2007.

\bibitem{Zhou-2009-clustering}
Yang Zhou, Hong Cheng, and Jeffrey~Xu Yu.
\newblock Graph clustering based on structural/attribute similarities.
\newblock {\em Proceedings of the {VLDB} Endowment}, 2(1):718--729, 2009.

\end{thebibliography}

\begin{appendix}
\section{Connection to energy minimization}\label{sec:energy}
Special cases of our \obj{} framework fit within the paradigm of energy function minimization in computer vision~\cite{boykov2001fast,kolmogorov2004what,freedman2005energy}.
The energy minimization approach uses minimum $s$-$t$ cut algorithms for functions of binary and ternary variables which satisfy a certain regularity property.
In this appendix we show that our objective induces a regular energy function in both the graph and
rank-3 hypergraph case when there are two categories.
This connection implies that in addition to the algorithms we developed above,
we may use the tools developed for energy minimization to facilitate solving
these special instances of our problem exactly in polynomial time.

\subsection{Graphs with two categories}\label{sec:graph}
To show the connection to energy minimization, we cast our objective as a so-called energy function.
With two categories, we can encode our coloring $\mathcal{C}$ of the $n$ nodes in the graph as a vector of 1s and 2s corresponding to which color each node takes.
For this, we write $\mathcal{C}=(x_1,....,x_n)$ where $x_i=1$ if node $i$ is assigned color 1 and $x_i=2$ if node $i$ is assigned a color 2.
Now the \obj{} objective can be written as an energy function as
\[
\cc(\mathcal{C})=\sum_{i<j\mbox{ s.t. } (i,j)\in E}E^{i,j}(x_i,x_j),
\]
where
\[
\begin{bmatrix}E^{i,j}(1,1) & E^{i,j}(1,2)\\E^{i,j}(2,1) & E^{i,j}(2,2)\end{bmatrix}=\begin{bmatrix} 0 & 1\\1 & 1\end{bmatrix}
\]
if $(i,j)$ is of color $1$ and
\[
\begin{bmatrix}E^{i,j}(1,1) & E^{i,j}(1,2)\\E^{i,j}(2,1) & E^{i,j}(2,2)\end{bmatrix}=\begin{bmatrix} 1 & 1\\1 & 0\end{bmatrix}
\]
if $(i,j)$ is of color $2$.
We will show that this energy function satisfies a regularity property, which enables a reduction of our objective to a minimum $s$-$t$ graph cut~\cite{kolmogorov2004what}.

\begin{definition}\label{def:regular}
A function of two binary variables is \emph{regular} if each term satisfies the following inequality
$$E^{i,j}(0,0)+E^{i,j}(1,1)\leq E^{i,j}(0,1)+E^{i,j}(1,0).$$
\end{definition}

It is easy to see that our energy function is indeed regular. We formalize this observation in the following theorem.

\begin{theorem}\label{thm:regular}
The \obj{} objective for graphs with two categories induces a regular energy function.
\end{theorem}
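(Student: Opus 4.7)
The plan is to verify directly from \Cref{def:regular} that the inequality $E^{i,j}(0,0)+E^{i,j}(1,1)\leq E^{i,j}(0,1)+E^{i,j}(1,0)$ holds for every pair term $E^{i,j}$ that arises in the decomposition of $\cc(\mathcal{C})$. Since the energy function has already been written as a sum of pairwise terms indexed by edges $(i,j)\in E$, and regularity is by definition a condition that applies term-by-term, the whole task reduces to checking the inequality for the two possible $2\times 2$ matrices exhibited just before the statement.

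First I would reconcile the indexing: the color vector uses values in $\{1,2\}$ while \Cref{def:regular} uses $\{0,1\}$. This is only a relabeling (say $1\mapsto 0$ and $2\mapsto 1$), so the entries of each $E^{i,j}$ matrix carry over unchanged and regularity is invariant under this renaming. After that, I would handle the two cases.

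For an edge $(i,j)$ of color $1$, the matrix has entries $E(1,1)=0$, $E(1,2)=E(2,1)=E(2,2)=1$, giving $E(1,1)+E(2,2)=1$ and $E(1,2)+E(2,1)=2$, so the inequality $1\le 2$ holds. For an edge $(i,j)$ of color $2$, the matrix has entries $E(1,1)=E(1,2)=E(2,1)=1$ and $E(2,2)=0$, giving $E(1,1)+E(2,2)=1$ and $E(1,2)+E(2,1)=2$, and again $1\le 2$.

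Since these are the only two possible pair terms appearing in the decomposition of $\cc(\mathcal{C})$, every term is regular and therefore the entire energy function is regular by \Cref{def:regular}. There is no real obstacle here: the work is essentially arithmetic on two matrices, and the only mildly subtle point is acknowledging the $\{1,2\}$ versus $\{0,1\}$ labeling convention so that the verification lines up cleanly with the definition cited.
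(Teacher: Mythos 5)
Your proof is correct and matches the paper's argument: both verify regularity by checking that the diagonal entries of each of the two possible $2\times 2$ energy matrices sum to $1$ while the off-diagonal entries sum to $2$. Your extra remark about the $\{1,2\}$ versus $\{0,1\}$ relabeling is a minor clarification the paper leaves implicit.
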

\begin{proof} Regardless of whether $(i,j)$ is an edge of color 1 or of color 2, the off-diagonal terms in the energy function sum to 2 while the diagonal terms sum to 1.
  This ensures that the regularity property is satisfied.
\end{proof}

Having established the regularity of our energy function,
the results of Kolmogorov and Zabih~\cite[Theorem 4.1]{kolmogorov2004what} say that we can cast the energy minimization problem as an $s$-$t$-cut problem on a directed graph.
In particular, following their construction, we would create a directed graph $G'=(V',E')$ from $G=(V,E,C,\ell)$ as follows,
which is similar to the reduction we used in \cref{sec:graph2}.
\begin{itemize}
    \item Append nodes $s$ and $t$ to $E'$
    \item For every undirected edge $(i,j)$ with $i<j$ in $G$, if $(i,j)$ has color 1, create a directed edge $(i,j)$ and a directed edge $(j,t)$ in $E'$, while if $(i,j)$ has color 2 in $G$, append the directed edge $(i,j)$ and the directed edge $(s,i)$ to $E'$
\end{itemize}
This construction guarantees that the energy function $E^{i,j}$ of every edge $(i,j)$ in $G$ is exactly represented by the corresponding cut on the subgraph in $G'$ which the edge induced. 
The following theorem is then a result of the additivity theorem from Kolmogorov and Zabih~\cite{kolmogorov2004what}.
\begin{theorem}
Let $\mathcal{C^*}$ be a two-colored clustering that is the solution of the $s$-$t$-mincut problem on the graph $G'$ constructed using the procedure above. Then $\mathcal{C^*}$ also optimizes the \obj{} objective for the original graph $G$.
\end{theorem}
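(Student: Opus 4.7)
The plan is to invoke the framework of Kolmogorov and Zabih~\cite{kolmogorov2004what} directly. By \Cref{thm:regular}, the \obj{} objective on a two-category graph decomposes as a sum of regular pairwise energy terms $E^{i,j}(x_i,x_j)$, one per edge $(i,j)\in E$. Since each term is regular, their ``additivity theorem'' guarantees that if we build $G'$ by overlaying, for each edge $(i,j)\in E$, a small directed ``gadget subgraph'' that represents $E^{i,j}$ as a cut function, then the value of any $s$-$t$ cut in $G'$ equals $\cc(\mathcal{C}) + \kappa$ for a constant $\kappa$ independent of $\mathcal{C}$, where $\mathcal{C}$ is the two-coloring corresponding to the source/sink partition. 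Consequently, a minimum $s$-$t$ cut in $G'$ is in bijection with a minimum of $\cc$, which is what we want.

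First I would fix a convention identifying the two colors with the source and sink sides of the cut: color $2$ is the source side ($s$) and color $1$ is the sink side ($t$). Next, I would verify that the two gadgets described before the theorem statement (directed edges $(i,j),(j,t)$ for color-$1$ edges; directed edges $(i,j),(s,i)$ for color-$2$ edges) represent the corresponding $E^{i,j}$. This is a small case check over the four assignments of $(x_i,x_j)\in\{1,2\}^2$. For instance, for a color-$1$ edge, the directed cut values on the gadget evaluate to $0,1,1,1$ for $(x_i,x_j)=(1,1),(1,2),(2,1),(2,2)$ respectively, matching $E^{i,j}$ exactly (so here the additive constant is $0$). A symmetric check handles the color-$2$ case.

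Having verified per-edge representation, I would then apply the additivity theorem of~\cite{kolmogorov2004what}: summing gadgets over all $(i,j)\in E$ yields a graph whose $s$-$t$ cut value on any $S\ni s$, $\bar S\ni t$ equals $\sum_{(i,j)\in E}E^{i,j}(x_i,x_j)=\cc(\mathcal{C})$ (plus a global additive constant aggregating per-edge constants), where $\mathcal{C}$ assigns color $2$ to $V\cap S$ and color $1$ to $V\cap \bar S$. Since this constant does not depend on $\mathcal{C}$, a minimizer $\mathcal{C}^*$ of the $s$-$t$ mincut on $G'$ is also a minimizer of $\cc$, completing the proof.

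The main obstacle is essentially bookkeeping rather than a conceptual difficulty: one must fix the source/sink-to-color convention consistently with the directions chosen in the construction, and verify that the per-edge cut values on the gadgets agree with $E^{i,j}$ (and not with some rotated or reflected version). Once the convention is fixed, the theorem is an immediate consequence of regularity together with the additivity theorem of Kolmogorov and Zabih.
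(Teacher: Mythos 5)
Your proposal is correct and follows essentially the same route as the paper: the paper also establishes regularity (its Theorem on regular energy functions), asserts that each per-edge gadget exactly represents $E^{i,j}$ as a cut function, and then cites the additivity theorem of Kolmogorov and Zabih to conclude. Your explicit source/sink-to-color convention and the four-case verification of the gadget cut values simply fill in details the paper leaves implicit.
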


\subsection{Rank-3 hypergraphs with two categories}
The energy minimization framework also allows us to handle the case of rank-3 hypergraphs.
Adopting the conventions of the previous subsection, we can write the clustering objective as follows. 
\begin{align*}
  & \cc(\mathcal{C}) = \\
  & \sum_{i<j \;:\; (i,j)\in E}E^{i,j}(x_i,x_j) + \sum_{i<j<k \;:\; (i,j,k)\in E}E^{i,j,k}(x_i,x_j,x_k),
\end{align*}
where $E^{i,j}(x_i,x_j)$ is the same as in the previous section and the higher-order energy for hyperedges is
\begin{align*}
  &\begin{bmatrix}E^{i,j,k}(1,1,1) & E^{i,j,k}(1,1,2)\\E^{i,j,k}(1,2,1) & E^{i,j,k}(1,2,2)\\E^{i,j,k}(2,1,1) & E^{i,j,k}(2,1,2)\\E^{i,j,k}(2,2,1) & E^{i,j,k}(2,2,2)\end{bmatrix} \\
  &\qquad =\begin{bmatrix}0&1\\1&1\\1&1\\1&0\end{bmatrix} +
  \mathbb{I}(C[i]=C[j]=C[k]=2)\begin{bmatrix}1&0\\0&0\\0&0\\0&0\end{bmatrix}  \\
       &\qquad\qquad +\mathbb{I}(C[i]=C[j]=C[k]=1)\begin{bmatrix}0&0\\0&0\\0&0\\0&1\end{bmatrix}.
\end{align*}
The energy function defined this way is regular, in the sense that all projections into two variables are regular.
We formalize this observation in the theorem below.
\begin{theorem}
The \obj{} objective for rank-3 hypergraphs with two categories induces a regular energy function.
\end{theorem}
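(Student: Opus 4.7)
My plan is to verify, term-by-term, that the energy function is regular. The edge (rank-2) terms were already dispatched in \cref{thm:regular}, so the new work is to show that each rank-3 hyperedge term $E^{i,j,k}$ is regular in the higher-order sense, meaning—following Kolmogorov--Zabih~\cite{kolmogorov2004what} and its higher-order extension by Freedman--Drineas~\cite{freedman2005energy}—that every two-variable projection obtained by fixing one of the three arguments satisfies \cref{def:regular}. The first step is to collapse the matrix expression in the statement into the simple closed form: $E^{i,j,k}(x_i, x_j, x_k) = 0$ if $(x_i,x_j,x_k)$ equals the monochromatic triple matching $\ell(e)$, and $E^{i,j,k} = 1$ otherwise. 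This reading is immediate because the base matrix has zeros at both $(1,1,1)$ and $(2,2,2)$, and exactly one of the two indicator corrections fires and restores the penalty at the monochromatic triple that does \emph{not} match $\ell(e)$.

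Next, I would exploit the total symmetry of $E^{i,j,k}$ under permutations of its arguments, and under swapping the two colors, to reduce the verification to a single representative case. For a color-1 hyperedge, fixing $x_i=1$ produces the $2\times 2$ matrix in $(x_j,x_k)$ equal to $\left[\begin{smallmatrix} 0 & 1 \\ 1 & 1 \end{smallmatrix}\right]$, so the regularity inequality reads $0+1 \leq 1+1$ and holds. Fixing $x_i=2$ yields the constant all-ones matrix, for which regularity is trivial (with equality). The four remaining projections—fixing $x_j$ or $x_k$, and the analogues for a color-2 hyperedge—follow by the aforementioned symmetries and give the same $0+1 \leq 1+1$ or $1+1 \leq 1+1$ inequalities. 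Thus every projection of every hyperedge term is regular.

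Finally I would combine these checks with \cref{thm:regular} and invoke additivity: a sum of regular terms is itself a regular (higher-order) energy, which is the hypothesis needed to apply the graph-construction algorithm of~\cite{freedman2005energy} and reduce the rank-3 two-color \obj{} objective to a single minimum $s$-$t$ cut. The main potential obstacle is not a hidden inequality but rather the definitional bookkeeping—one must take care that the version of higher-order regularity used here (all two-variable projections are regular in the sense of \cref{def:regular}) is precisely the one required by the cited reduction, and that summation preserves it. Both facts are standard in the energy-minimization literature, so once the case analysis above is in place the theorem follows with no further work.
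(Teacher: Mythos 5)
Your proof is correct and follows exactly the route the paper intends: the paper states this theorem without an explicit proof, merely asserting that all two-variable projections of the hyperedge energy are regular, and your case analysis (reducing by permutation and color-swap symmetry to the projections $\left[\begin{smallmatrix}0&1\\1&1\end{smallmatrix}\right]$ and the all-ones matrix) supplies precisely the verification that is left implicit. Your reading of the matrix expression as the $0/1$ mistake indicator and the appeal to additivity of regularity are both consistent with the paper's setup, so nothing further is needed.
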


We proceed to construct a graph $G'$ in a manner similar to that described in
the preceding subsection which will allow us to optimize the \obj{} objective
through a minimum $s$-$t$-cut on $G'$. After appending the source and sink nodes
$s$ and $t$ to $G'$, we perform the procedure of the previous section for all
edges $e\in E$. For the remaining hyperedges of rank 3, we follow the procedure
outlined by Kolmogorov and Zabih~\cite[Section 4.1]{kolmogorov2004what}.
This is a special case of the more general approach we present in
\cref{sec:twocats}. In particular, depending on the hyperedge
color, we use one of the two directed tree structures in \cref{fig:subgraphs}.
The fact that the minimum $s$-$t$ cut on $G'$ thus constructed induces a partition of the nodes in $E$
which minimizes the \obj{} objective follows from a proof similar to that
presented in the graph case. The actual proof is the special $r=3$ case of the
main proof in \cref{sec:twocats}.
Finally, we can establish the following theorem.
\begin{theorem}
Let $\mathcal{C^*}$ be a two-colored clustering that is the solution of the $s$-$t$-mincut problem on the graph $G'$ constructed using the procedure above. Then $\mathcal{C^*}$ also optimizes the \obj{} objective for the original hypergraph $G$.
\end{theorem}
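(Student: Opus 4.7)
The plan is to reduce this theorem to the Theorem already proved in \cref{sec:twocats} by observing that the construction of $G'$ here is exactly the $r=3$ specialization of that general hypergraph reduction, restricted moreover to the fact that some hyperedges have rank $2$ (ordinary edges). First I would set up the bijection between cuts and clusterings: given any $s$-$t$ cut $(S \cup \{s\}, \bar S \cup \{t\})$ of $G'$, let $\mathcal{C}[i] = c_1$ if $i \in S$ and $\mathcal{C}[i] = c_2$ if $i \in \bar S$ for $i \in V$, and conversely every two-coloring of $V$ arises this way. The auxiliary nodes $u_e$ introduced for each rank-$3$ hyperedge are not part of the original vertex set; they are placed on whichever side of the cut minimizes the cut value, and the task is to show that this optimal placement makes the cut contribution on the gadget for $e$ equal to $m_{\mathcal{C}}(e)$.

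Next I would argue additivity. Because $G'$ decomposes as the edge-disjoint union of subgraphs, one per element of $E$ (the half-weight triangle-plus-terminals gadget of \cref{sec:graph2} for ordinary edges, and the directed tree gadget of \cref{fig:subgraphs} for rank-$3$ hyperedges), any $s$-$t$ cut of $G'$ has value equal to the sum over $e \in E$ of the cut restricted to the gadget of $e$. Hence it suffices to verify for each individual gadget that, after optimally assigning the auxiliary node $u_e$ (if any), the contribution to the cut equals $m_{\mathcal{C}}(e) \in \{0,1\}$. For the pairwise edges this is exactly the content of the Proposition in \cref{sec:graph2}. For the rank-$3$ hyperedges this is exactly the argument given in the proof of the Theorem in \cref{sec:twocats}: if all three $v_1, v_2, v_3$ share the color of $e$, the cut inside the gadget is zero; if at least one disagrees, the cut is exactly one, with the optimal side for $u_e$ determined by whether the edge $(u_e, t)$ or one of the edges $(v_i, u_e)$ is cheaper to cut (and symmetrically for color $c_1$).

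Concluding, taking the minimum over $s$-$t$ cuts of $G'$ is the same as taking the minimum over two-colorings $\mathcal{C}$ of $V$ of $\sum_{e \in E} m_{\mathcal{C}}(e) = \cc(\mathcal{C})$, so any minimizer $\mathcal{C}^*$ of the $s$-$t$ mincut problem, read off via the bijection above, also minimizes \cc{}. The main obstacle I anticipate is not really an obstacle but a bookkeeping point: making sure the mixed construction (half-weight undirected triangles for rank-$2$ edges, directed tree gadgets for rank-$3$ hyperedges) really is additive and that an $s$-$t$ cut value on each piece is independently minimized by the optimal side for its auxiliary node, so that the per-gadget analyses can be combined without cross-terms. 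Since both pieces have been individually verified earlier in the paper and the gadgets share only the vertices of $V$ and the terminals $s,t$, this goes through by a direct appeal to the two prior results.
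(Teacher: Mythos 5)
Your proposal is correct and takes essentially the same approach as the paper, which justifies this theorem in one line as the $r=3$ special case of the main theorem of \cref{sec:twocats} together with additivity of the cut over the per-(hyper)edge gadgets. One minor bookkeeping note: for the rank-$2$ edges the ``procedure above'' uses the directed Kolmogorov--Zabih gadget of the preceding appendix subsection rather than the half-weight undirected triangle of \cref{sec:graph2} that you cite, but since both gadgets contribute exactly $m_{\mathcal{C}}(e)$ to any induced cut, your argument goes through unchanged.
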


\end{appendix}


\end{document}